\newcommand{\ie}{{\em i.e.}}
\newcommand{\ens}[1]{\ensuremath{\{#1\}}}
\newcolumntype{M}[1]{>{\centering}m{#1}}
\newcolumntype{C}{>{$\displaystyle}c<{$}}
\newtheorem{definition}{Definition}
\newtheorem{lemma}{Lemma}
\newcommand{\qed}{\hfill$\Box$ \vspace{0.5 cm}}
\newenvironment{proof}{{\it Proof: }}{\qed}
\begin{document}

\title{Internal links and pairs as a new tool\\ for the analysis of bipartite complex networks}

\author{
\begin{minipage}{14cm}
\center
Oussama Allali, Lionel Tabourier, Cl{\'e}mence Magnien and Matthieu Latapy\\
LIP6, Universit\'e Pierre et Marie Curie, 4 Place Jussieu, F-75252 Paris, France\\ 
\url{firstname.lastname@lip6.fr}
\end{minipage}
}


\date{}

\maketitle

\begin{abstract}
Many real-world complex networks are best modeled as bipartite (or 2-mode) graphs, where nodes are divided into two sets with links connecting one side to the other.
However, there is currently a lack of methods to analyze properly such graphs as most existing measures and methods are suited to classical graphs.
A usual but limited approach consists in deriving 1-mode graphs (called projections) from the underlying bipartite structure, though it causes important loss of information and data storage issues. 
We introduce here {\em internal links} and {\em pairs} as a new notion useful for such analysis: it gives insights on the information lost by projecting the bipartite graph.
We illustrate the relevance of theses concepts on several real-world instances illustrating how it enables to discriminate behaviors among various cases when we compare them to a benchmark of random networks. 
Then, we show that we can draw benefit from this concept for both modeling complex networks and storing them in a compact format.
\end{abstract}


\section{Introduction}

Many real-world networks have a natural bipartite (or 2-mode)
structure and so are best modeled by bipartite graphs: two kinds of
nodes coexist and links are between nodes of different kinds only.
Typical examples include biological networks in which proteins are
involved in biochemical reactions, occurrence of words in
sentences of a book, authoring of scientific papers, file-provider
graphs where each file is connected to the individuals providing it,
and many social networks where people are members of groups like
directory boards. See \cite{newman2001random,latapy2008basic} for more
examples.

The classical approach for studying such graphs is to turn them into classical (non-bipartite) graphs using the notion of {\em projection}: considering only one of the two types of nodes and linking any two nodes if they share a neighbor  
in the bipartite graph. This leads for instance to cooccurrence graphs, where two words are linked if they appear in a same sentence, coauthoring graphs, where two researchers are linked if they are authors of a same paper, interest graphs where individuals are linked together if they provide a same file, etc.

This approach however has severe drawbacks \cite{latapy2008basic}. In particular, it leads to huge projected graphs, and much information is lost in the projection. There is therefore much interest in methods that would make it possible to study bipartite graphs directly, without resorting to projection. Despite previous efforts to develop such methods \cite{lind2005cycles,latapy2008basic,Zweig2011Systematic}, much remains to be done in this direction.

We propose in this paper a new notion, namely {\em internal links} and {\em pairs}, useful for the analysis of real-world bipartite graphs. We introduce it in Section~\ref{sec:def}, then present some datasets in Section~\ref{sec:data} that we use as typical real-world cases which we analyze in Section~\ref{sec:meas} with regard to our new notion. We explore a more algorithmic perspective in Section~\ref{sec:delet}.

\section{Internal pairs and links \label{sec:def}}

Let us consider a bipartite graph $G = (\bot,\top,E)$, with $E \subseteq \bot \times \top$. We call nodes in $\bot$ (resp. $\top$) the bottom (resp. top) nodes. We denote by $N(u)=\{v \in (\bot \cup \top),\ (u,v) \in E\}$ the neighborhood of any node $u$. We extend this notation to any set $S$ of nodes as follows: $N(S) = \cup_{v\in S} N(v)$.

The $\bot$-projection of $G$ is the graph $G_\bot =(\bot,E_\bot)$ in which $(u,v) \in E_\bot$ if $u$ and $v$ have at least one neighbor in common (in $G$): $N(u) \cap N(v) \neq \emptyset$.
We will denote by $N_\bot(u)$ the neighborhood of a node $u$ in $G_{\bot}$: $N_\bot(u)=\{v \in \bot, \ (u,v) \in E_\bot\}=N(N(u))$. The $\top$-projection $G_\top$ is defined dually.

For any pair of nodes $(u,v) \notin E$, we denote by $G + (u,v)$ the graph $G' = (\bot,\top,E \cup \{(u,v)\})$ obtained by adding the new link $(u,v)$ to $G$. For any link $(u,v) \in E$, we denote by $G - (u,v)$ the graph $G' = (\bot,\top,E \setminus \{(u,v)\})$ obtained by removing link $(u,v)$ from $G$.

\begin{definition}[internal pairs]
A pair of nodes $(u,v)$ with $(u,v) \notin E$ is a {\em $\bot$-internal} pair of $G$ if the $\bot$-projection of $G' = G + (u,v)$ is identical to the one of $G$.
We define $\top$-internal pairs dually.
\end{definition}

\begin{figure}[h!]
\centering
\begin{tabular}{M{4.5cm}M{4.5cm}M{4.5cm}M{4.5cm}}
\includegraphics[width=0.25\textwidth]{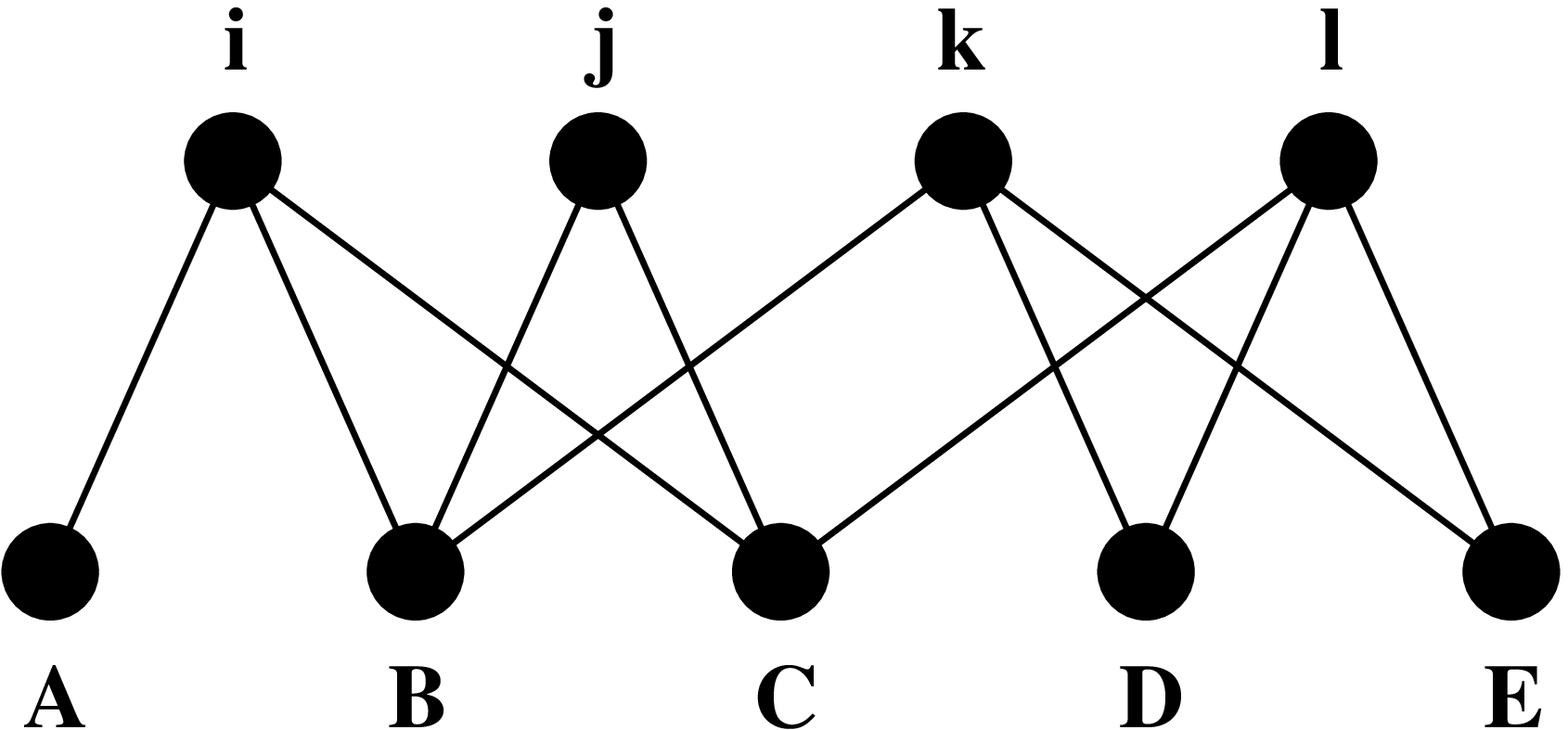} & 
\includegraphics[width=0.25\textwidth]{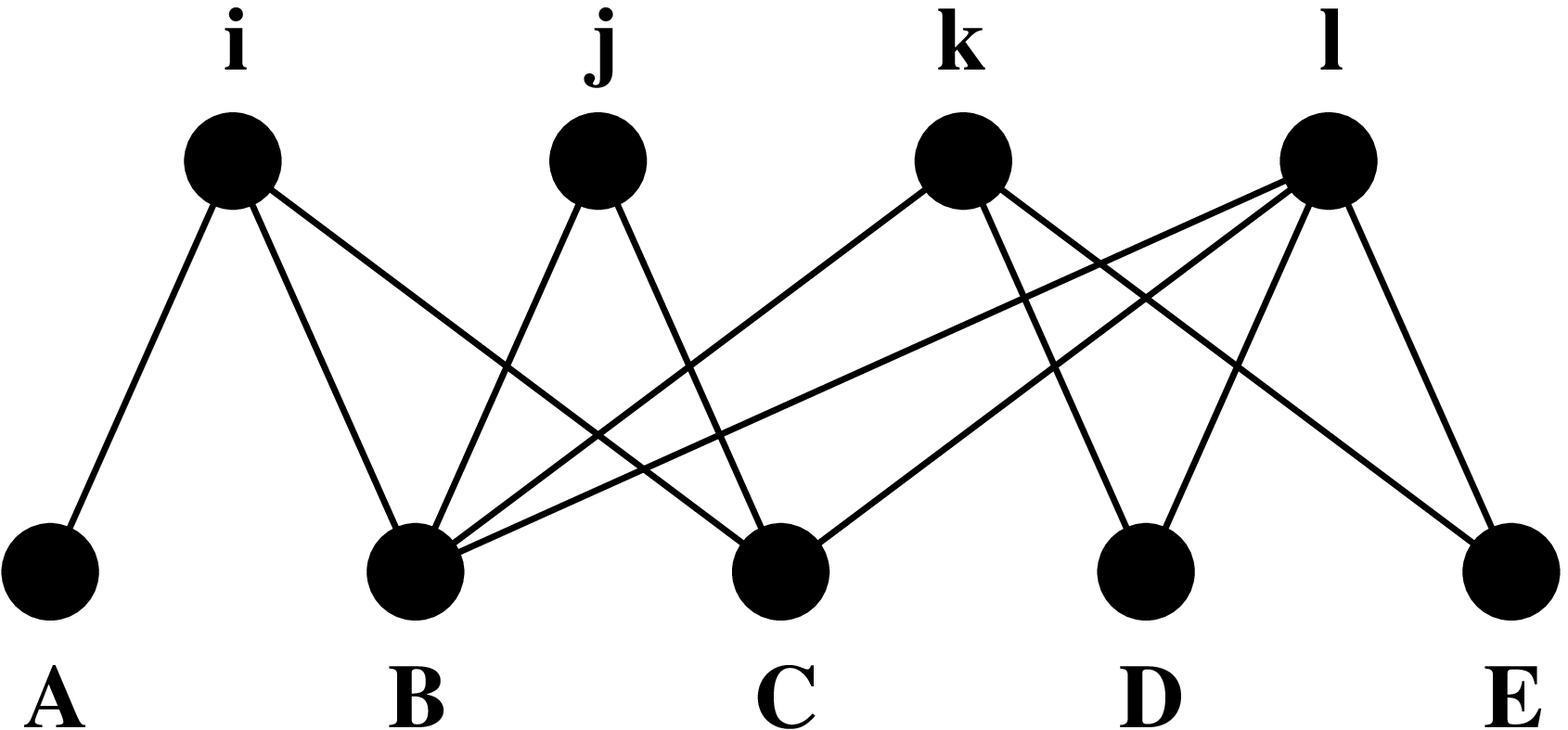} & 
\includegraphics[width=0.17\textwidth]{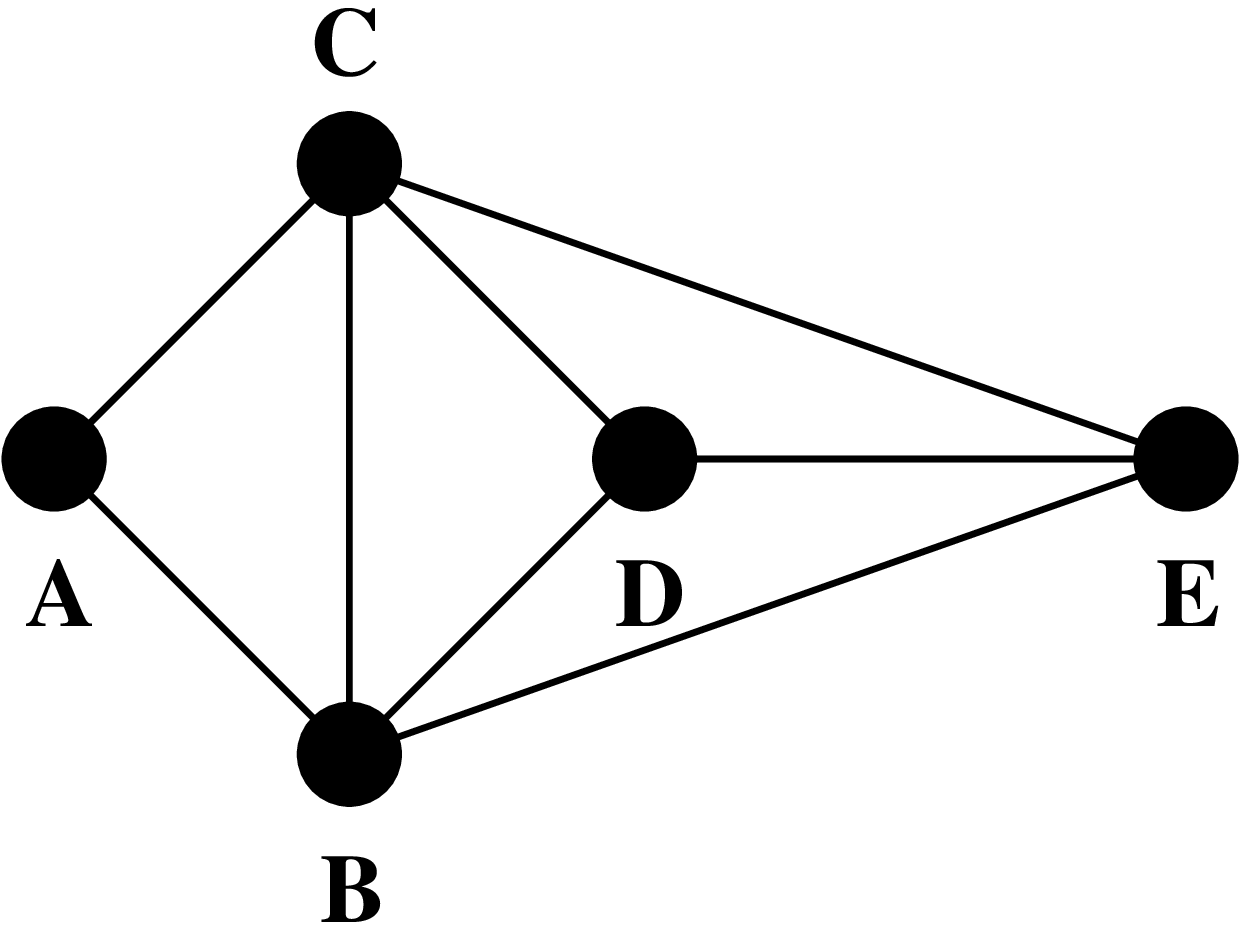}
\tabularnewline
$G$ & $G'=G+(B,l)$ & $G'_\bot = G_\bot$ 
\tabularnewline
\end{tabular}
\caption{\textbf{Example of $\bot$-internal pair}. 
Left to right: a bipartite graph $G$, the bipartite graph $G'$ obtained by adding link $(B,l)$ to $G$, and the $\bot$-projection of these two graphs. As $G'_{\bot} = G_{\bot}$, $(B,l)$ is a $\bot$-internal pair of $G$.}
\label{Ex-internal-pair}
\end{figure}

\begin{definition}[internal links]
A link $(u,v) \in E$ is a {\em $\bot$-internal} link of $G$ if the $\bot$-projection of $G' = G - (u,v)$ is identical to the one of $G$.
We define $\top$-internal links dually.
\end{definition}

\begin{figure}[h!]
\centering
\begin{tabular}{M{4.5cm}M{4.5cm}M{4.5cm}M{4.5cm}}
\includegraphics[width=0.25\textwidth]{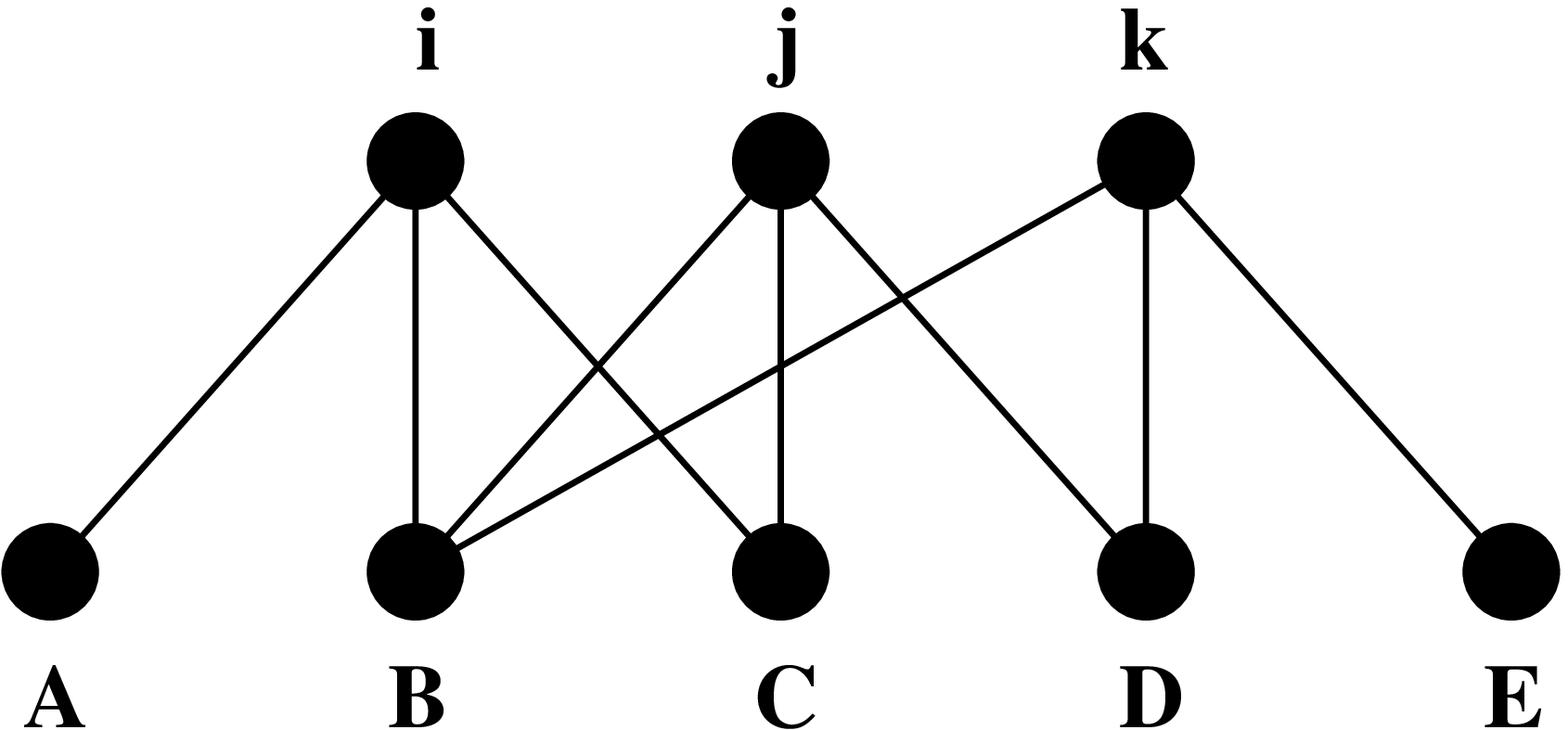} & 
\includegraphics[width=0.25\textwidth]{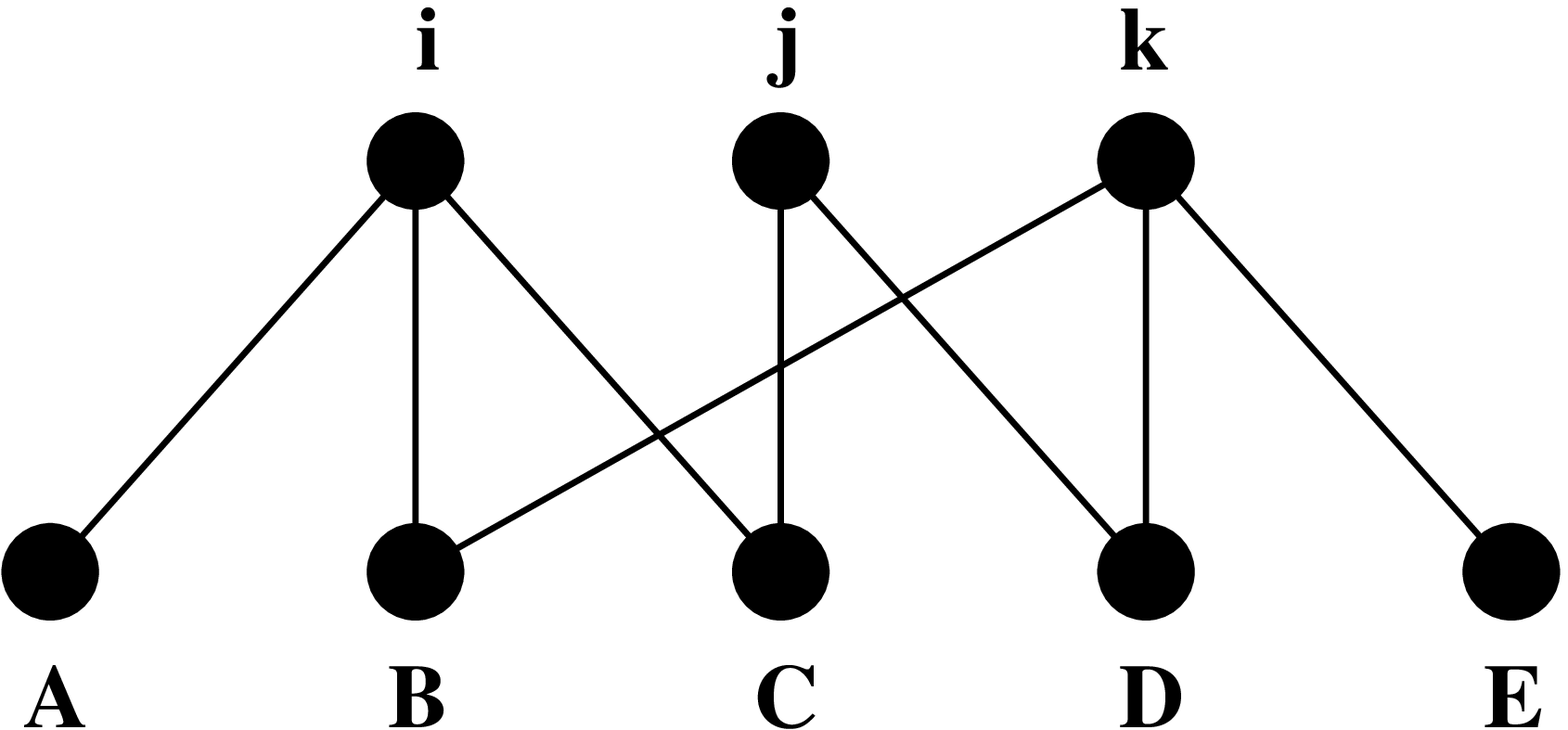} & 
\includegraphics[width=0.17\textwidth]{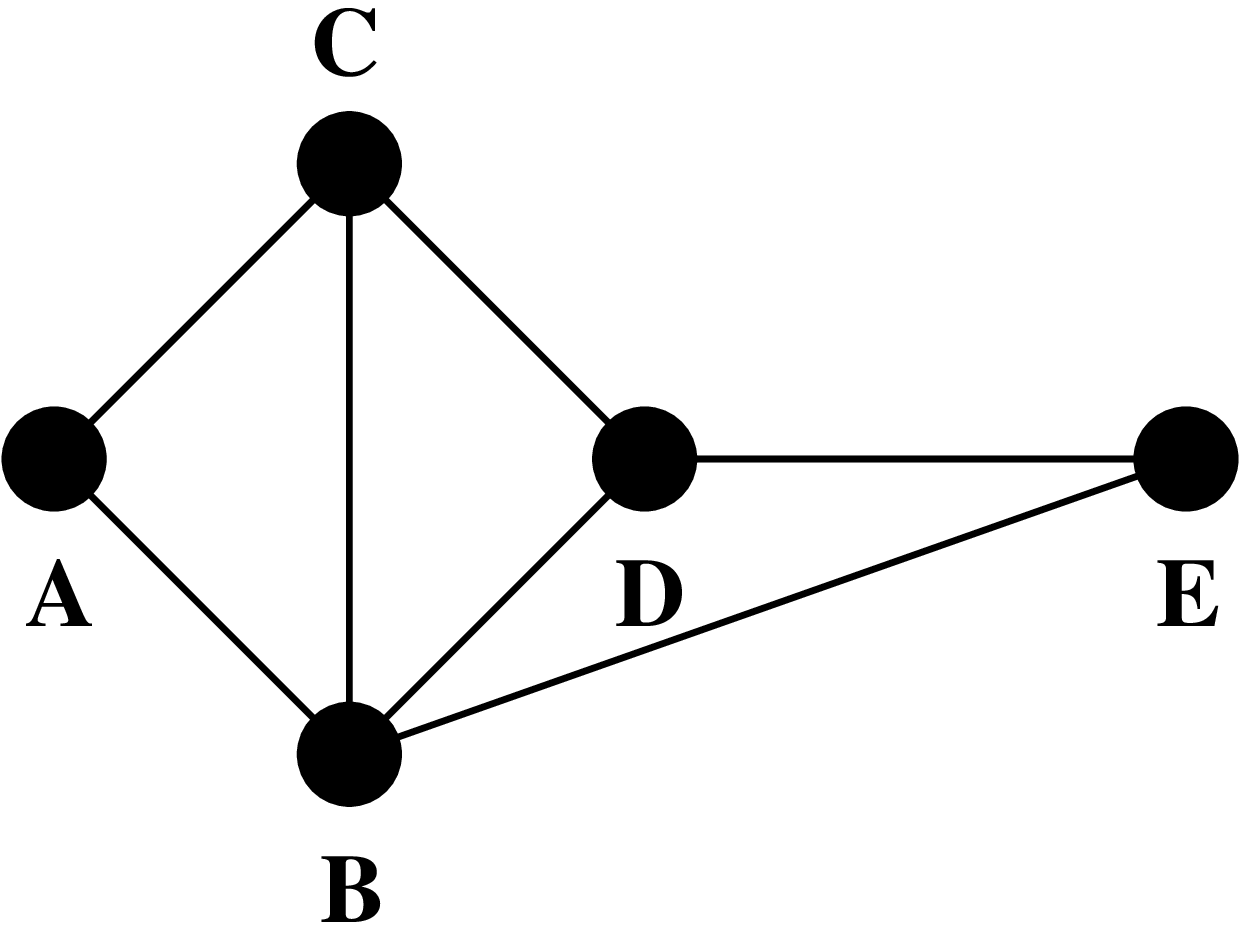}
\tabularnewline
$G$ & $G'=G-(B,j)$ & $G'_\bot = G_\bot$ 
\tabularnewline
\end{tabular}
\caption{\textbf{Example of $\bot$-internal link}. 
Left to right: a bipartite graph $G$, the bipartite graph $G'$ obtained by removing link $(B,j)$ from $G$, and the $\bot$-projection of these two graphs. As $G'_{\bot} = G_{\bot}$, $(B,j)$ is a $\bot$-internal link of $G$.}
\label{Ex-internal-link}
\end{figure}

In other words, $(u,v)$ is a $\bot$-internal pair of $G$ if adding the new link $(u,v)$ to $G$ does not change its $\bot$-projection; it is a $\bot$-internal link if removing link $(u,v)$ from $G$ does not change its $\bot$-projection. See Figure~\ref{Ex-internal-pair} and \ref{Ex-internal-link} for examples.

The notion of internal link is related to the redundancy of a node~\cite{latapy2008basic},
defined  for any node $v$ as the fraction of
pairs in $N(v)$ that are still linked together in the projection of the
graph $G'$ obtained from $G$ by removing $v$ and all its links
(all these pairs are linked in $G_\bot{}$). 
There
is however no direct equivalence between the two notions. The
redundancy is a node-oriented property: it gives a value for each
node, while the notion of internal links and pairs is link-oriented.
As illustrated on
Figure~\ref{fig:redundancy}, nodes exhibiting the same fraction of
internal links may have different redundancies, and conversely two
nodes having the same redundancy may correspond to different internal
connectivity patterns. 
It is possible to classify the links of each node as $\bot$ and $\top$-internal or not;
this induces a notion of {\em $\bot$-internal degree} of a node (resp. {\em $\top$-internal degree}), 
which is its number of internal links (see next
section).

\begin{figure}[h!]
\centering
\includegraphics[width=0.4\textwidth]{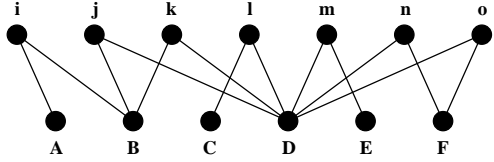} 
\caption{\label{fig:redundancy}\textbf{Redundancy versus internal links.}
In this graph, B and D have the same fraction of $\bot$-internal links ($ \frac{2}{3}$) while having different redundancies (resp. $ \frac{1}{3}$ and $ \frac{2}{15}$).
}
\end{figure}

We now give a characterization of internal links, which does not explicitly rely on the projection anymore and provides another point of view on this notion.

\begin{lemma}
\label{lemme-in}
A link $(u,v)$ of $G$ is $\bot$-internal if and only if $N(v) \setminus \ens{u}\subseteq N(N(u) \setminus \ens{v})$.
\end{lemma}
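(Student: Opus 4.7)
The plan is to argue directly from the definition of $\bot$-internal link. Since $(u,v) \in E \subseteq \bot \times \top$, we have $u \in \bot$ and $v \in \top$. Set $G' = G - (u,v)$; the goal is to show $G'_\bot = G_\bot$ if and only if $N(v) \setminus \ens{u} \subseteq N(N(u) \setminus \ens{v})$.

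First I would record the easy structural observation that deleting the link $(u,v)$ only modifies the neighborhoods of $u$ and $v$ themselves. Consequently, for any two bottom nodes $a,b$ distinct from $u$, the set $N(a) \cap N(b)$ of their common top-neighbors is the same in $G$ and $G'$, so edges of $G_\bot$ not incident to $u$ have the same status in both projections. Moreover $G'_\bot \subseteq G_\bot$ holds automatically, since $N(u)$ can only shrink when $v$ is removed from it. Therefore the projection can differ only by losing some edge $(u,w)$ with $w \in \bot \setminus \ens{u}$.

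Next I would pin down exactly when such a loss occurs. For fixed $w \in \bot \setminus \ens{u}$, $(u,w)$ lies in $G_\bot$ iff $N(u) \cap N(w) \neq \emptyset$, and in $G'_\bot$ iff $(N(u) \setminus \ens{v}) \cap N(w) \neq \emptyset$. The edge is destroyed exactly when the first condition holds but the second fails, which is equivalent to $N(u) \cap N(w) = \ens{v}$. This is the conjunction of $v \in N(w)$ (i.e., $w \in N(v) \setminus \ens{u}$) with $(N(u) \setminus \ens{v}) \cap N(w) = \emptyset$ (i.e., $w \notin N(N(u) \setminus \ens{v})$).

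Combining the two steps, $(u,v)$ is $\bot$-internal iff no such $w$ exists, i.e., every $w \in N(v) \setminus \ens{u}$ already belongs to $N(N(u) \setminus \ens{v})$, which is precisely the claimed inclusion. The only step with any subtlety is the localization observation in the second paragraph (ensuring that deletion affects only projection edges incident to $u$); once that is in hand, everything else is simply unpacking what it means for $v$ to be the unique common top-neighbor witnessing a projection edge at $u$.
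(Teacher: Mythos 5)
Your proposal is correct and follows essentially the same route as the paper's proof: both localize the possible difference between $G_\bot$ and $G'_\bot$ to projection edges $(u,x)$ with $x\in N(v)\setminus\ens{u}$, and then observe that such an edge survives the deletion exactly when some witness $y\in N(u)\setminus\ens{v}$ lies in $N(x)$. Your version is somewhat more explicit about why only edges incident to $u$ are at risk, which the paper compresses into the identity $E_\bot = E'_\bot \cup \{(u,x),\ x \in N(v) \setminus \ens{u}\}$, but the underlying argument is the same.
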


\begin{proof}{
Let us consider a link $(u,v) \in E$ and let $G'=G - (u,v)$ be the bipartite graph obtained by removing the link $(u,v)$ from $G$.
Then, by definition, \mbox{$E_\bot = E'_\bot \cup \{(u,x),\ x \in N(v) \setminus \ens{u}\}$}.

Suppose that $(u,v)$ is a $\bot$-internal link, \ie\ $E_\bot = E'_\bot$.
Then all links $(u,x)$ in the expression above already belong to $E'_{\bot}$.
Therefore, for each $x \in N(v) \setminus \ens{u}$, $\exists \ y \not= v \in \top$ such that $y \in N(u) \cap N(x)$. 
By symmetry, $x \in N(y)$ and $y \in N(u) \setminus \ens{v}$ therefore, $x \in N(N(u) \setminus \ens{v})$ and so $N(v) \setminus \ens{u} \subseteq N(N(u) \setminus \ens{v})$.

Suppose now that $N(v)\setminus \ens{u} \subseteq N(N(u)\setminus \ens{v})$. 
Then for each node $x \in N(v) \setminus \ens{u}$, $ \exists \ y \in N(u)\setminus{} \ens{v} $
such that $x \in N(y)$.
Thus, by definition of the projection, $(u,x) \in E'_{\bot}$.
Therefore
$E_{\bot}=E'_{\bot}$ and the link $(u,v)$ is $\bot$-internal.}
\end{proof}

\section{Datasets \label{sec:data}}

Our aim is to evaluate the importance of internal pairs and links in large real-world graphs,
 rather than obtain specific conclusions in a particular context.
That is why we study various instances of real-world bipartite graphs, expecting to observe different behaviors. 
We present in this section the datasets we will use and summarize their general features (number of nodes and links).
The graphs under consideration are social ones connecting people ($\bot$-nodes) through events, groups or similar interests ($\top$-nodes).

\begin{itemize}
\item {\em Imdb-movies} \cite{barabasi1999emergence} is obtained from the {\em Internet Movie Database} (\url{www.imdb.com}): 
it features actors connected to the movies they played in.
$|\bot|=127,823$ actors, $|\top|=383,640$ movies, $|E|=1,470,418$.
\item {\em Delicious-tags} \cite{gorlitz2008} consists of {\em Delicious} (\url{www.delicious.com}) users connected to the tags they use for indexing their bookmarks.
$|\bot|=532,924$ users, $|\top|=2,474,234$ tags, $|E|=37,421,585$.
\item {\em Flickr-tags} \cite{prieur2008} consists of {\em Flickr} (\url{www.flickr.com}) users connected to the tags they use for indexing their photos.
$|\bot|=319,675$ users, $|\top|=1,607,879$ tags, $|E|=13,336,993$.
\item {\em Flickr-comments}: same as above, except that {\em Flickr} users are linked to the photos they comment.
$|\bot|=760,261$ users, $|\top|=12,678,244$ photos, $|E|=41,904,158$.
\item {\em Flickr-favorites}: same as above, except that users are linked to the photos they pick up as favorites.
$|\bot|=321,312$ users, $|\top|=6,450,934$ photos, $|E|=17,871,828$. 
\item {\em Flickr-groups}: same as above, except that users are linked to the groups they belong.
$|\bot|=72,875$ users, $|\top|=381,076$ groups, $|E|=5,662,295$.
\item {\em P2P-files} \cite{aidouni2009} is obtained from {\em peer-to-peer} file exchange \textit{eDonkey}:
users are linked to the files they provide. 
$|\bot|=122,599$ peers, $|\top|=1,920,353$ files, $|E|=4,502,704$.
\item {\em PRL-papers} has been extracted from the {\em Web of Science} database (\url{www.isiwebofknowledge.com}), collecting papers and authors of \textit{Physical Review Letters} from 2004 to 2007.
$|\bot|=15,413$ authors, $|\top|=41,633$ papers, $|E|=249,474$.
\end{itemize}


\section{Analysis of real-world cases \label{sec:meas}}

In this section, we use the notions of internal links and pairs introduced in Section~\ref{sec:def} to describe the real-world cases presented in Section~\ref{sec:data}. Let us insist on the fact that our aim is {\em not} to provide accurate information on these specific cases, but to illustrate how internal links and pairs may be used to analyze real-world data. We first show that there are many internal links in typical data, then study the number of internal links of each node and the correlation of this number with the node's degree.



Since the links attached to $\top$-nodes (resp. $\bot$-nodes)
of degree 1 are all $\bot$-internal (resp. $\top$-internal),
and since there may be a large fraction of nodes with degree 1 in real-world graphs,
we only study in the sequel links attached to nodes with degree at least 2.

\subsection{Amount of internal links and pairs}



In order to capture how redundant is the bipartite structure, we compute the number of $\top$- and $\bot$-internal pairs and links. 
The fraction of internal links, denoted $f_{E_I}$ and presented in Table~\ref{basic} seems in general not negligible.
A  quantitative analysis of these values however requires the definition of a benchmark.
That is why we compare the measures to the corresponding amounts on random bipartite graphs with the same sizes and degree distributions, which is a typical random model to evaluate the deviation from an expected behavior -- see for example \cite{newman2001random,newman2003social}.
The measures related to this model will be referred to with the symbol *.

We denote by $\mathcal{P}_I(\bot)$ (resp. $\mathcal{P}_I(\top)$) the set of $\bot$-internal pairs (resp. $\top$-internal pairs)  and by 
$E_I(\bot)$ (resp. $E_I(\top)$) the set of $\bot$-internal links (resp. $\top$-internal links).
We normalize the number of internal pairs and links measured on real graphs  to the values obtained with the model described above.
The corresponding results are also presented in Table~\ref{basic}.


\begin{table}[h!]
\center
\begin{tabular}{l| c c c | c c c}
& $ f_{E_I}(\bot)$
& $ \frac{\mathcal{P}_I(\bot)}{\mathcal{P}^*_I(\bot)} $ 
& $ \frac{E_I(\bot)}{E^*_I(\bot)} $
& $ f_{E_I}(\top)$
& $ \frac{\mathcal{P}_I(\top)}{\mathcal{P}^*_I(\top)} $ 
& $ \frac{E_I(\top)}{E^*_I(\top)} $\\
\hline
\multicolumn{1}{l|}{\em Imdb-movies} & 0.031 & 0.441 & 47.0 & 0.026 & 0.491 & 147 \\
\multicolumn{1}{l|}{\em Delicious-tags} & 0.112 & 0.972  & 1.47  & 0.104 & 1.823 & 5.31 \\
\multicolumn{1}{l|}{\em Flickr-tags} & 0.117 & 0.920 & 1.51 & 0.048 & 1.040 & 2.50 \\
\multicolumn{1}{l|}{\em Flickr-comments} & 0.398 & 0.258 & 4.22 & 0.002 & 0.151 & 22.0 \\
\multicolumn{1}{l|}{\em Flickr-groups}  & 0.228 & 0.491 & 2.21  & 0.015 & 0.249 & 2.86 \\ 
\multicolumn{1}{l|}{\em Flickr-favorites} & 0.172 & 0.574 & 2.02 & 0.002 & 0.704 & 12.4 \\
\multicolumn{1}{l|}{\em P2P-files} & 0.337 & 0.082 & 8.53 & 0.136 & 0.092 & 1430 \\
\multicolumn{1}{l|}{\em PRL-papers} & 0.718 & 0.033 & 7.17 & 0.487 & 0.001 & 11.2 \\
\end{tabular}
\caption{
Fraction of internal links ($f_{E_I}$), number of internal pairs ($\mathcal{P}_I$) and internal links ($E_I$) of real-world graphs normalized to the values on random bipartite graphs with the same size and same degree distributions.}
\label{basic}
\end{table}

We first notice that the behaviors in regards to the amount of internal links are very heterogeneous.
Still some general trends can be underlined: 
in the random case, $\bot$- and $\top$-internal links are underestimated.
So, the probability of having nodes sharing the same neighborhood is higher in real graphs than in random ones.
We may indeed expect, for instance, that people participating to the same paper have a higher probability to be coauthors of another one  than a random pair of authors.

Meanwhile the numbers of internal pairs are generally overestimated in random networks. 
To understand this effect, let us consider the extreme case where two $\bot$-nodes in a graph have either exactly the same neighborhood,  or no common neighbors.
Then all links are $\bot$-internal, and the graph does not contain any internal pair.
This example suggests that the number of internal pairs is probably anti-correlated to the number of internal links. 

In general, there is a correlation between the fact that the number of internal links is underestimated in random graphs
and the fact that the number of internal pairs is overestimated,
but this correlation does not hold in all cases.
Moreover, there is no direct link between these observations and the sizes or average degrees of the considered graphs.


Finally, we observe a specific behavior for the two graphs which correspond to tagging databases,
\ie{} {\em Delicious-tags} and {\em Flickr-tags}.
For these graphs we observe the lowest gaps between the real and random cases for $\bot$-internal links
and the amounts of $\bot$-internal pairs are very close in the real and random cases.
Conversely, they are the only graphs for which the amount of $\top$-internal pairs is underestimated in random graphs.



Since we can observe a wide range of behaviors both for $\top$- and $\bot$- internal links and pairs,
we will restrict our analysis in the following to $\bot$- internal links and pairs for the sake of brevity.
We will see that this allows enlightening observations.

\subsection{Distribution of internal links among nodes}


The notion of internal links partitions the links of each node into
two sets: the internal ones and the others.
We now study how the fraction of internal links is distributed among nodes. 
On Figure~\ref{fig:cum_dist}, we plot the complementary cumulative distribution of the fraction of internal links per node for the datasets under study.
We also plot the complementary cumulative distribution for random graphs.

\begin{figure}[h!]
\begin{tabular}{M{3.5cm}M{3.5cm}M{3.5cm}M{3.5cm}}
{\em Imdb-movies:} & {\em Delicious-tags:} & {\em Flickr-tags:} & {\em Flickr-comments:}
\tabularnewline
\includegraphics[angle=-90,scale=0.16]{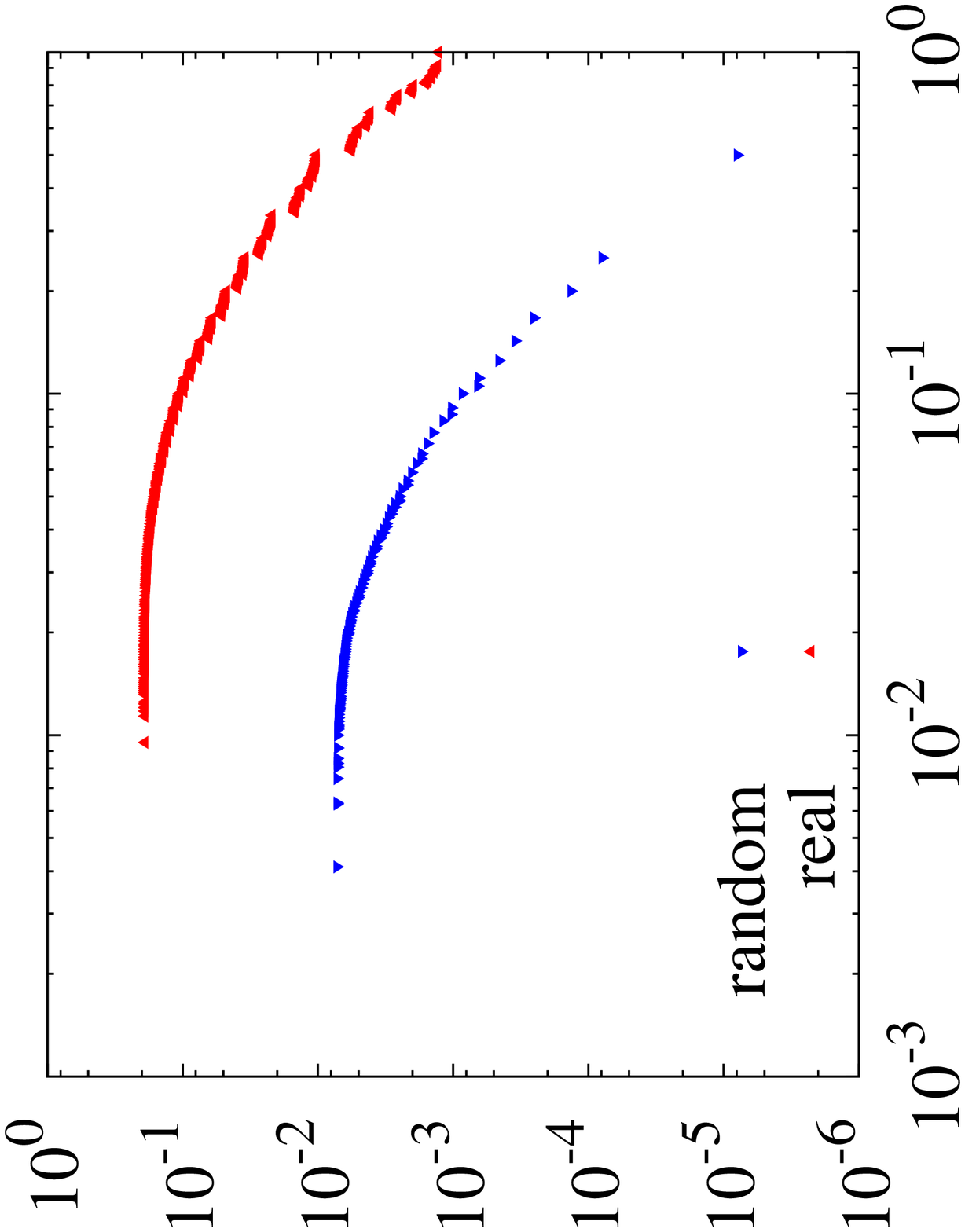} &
\includegraphics[angle=-90,scale=0.16]{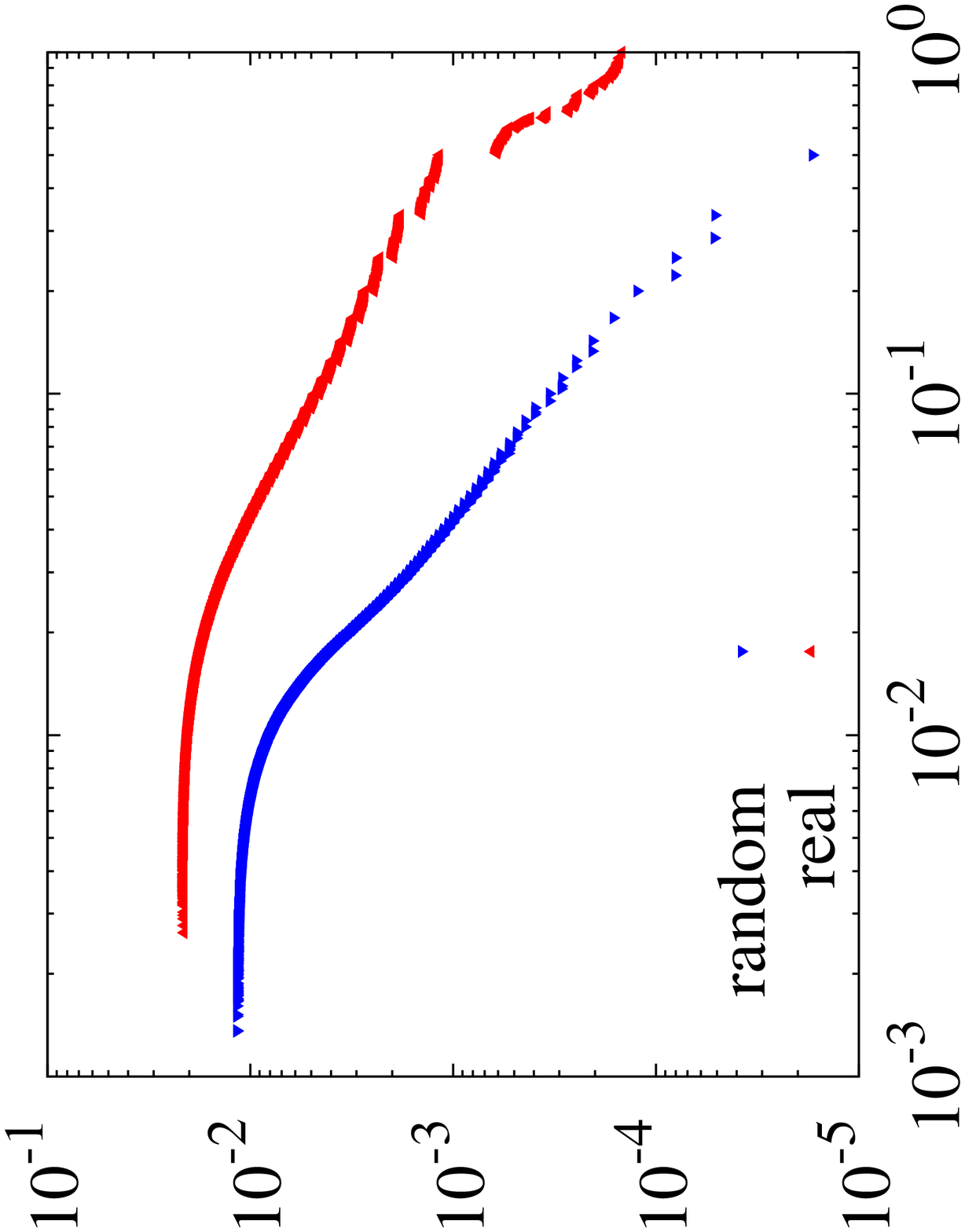} &
\includegraphics[angle=-90,scale=0.16]{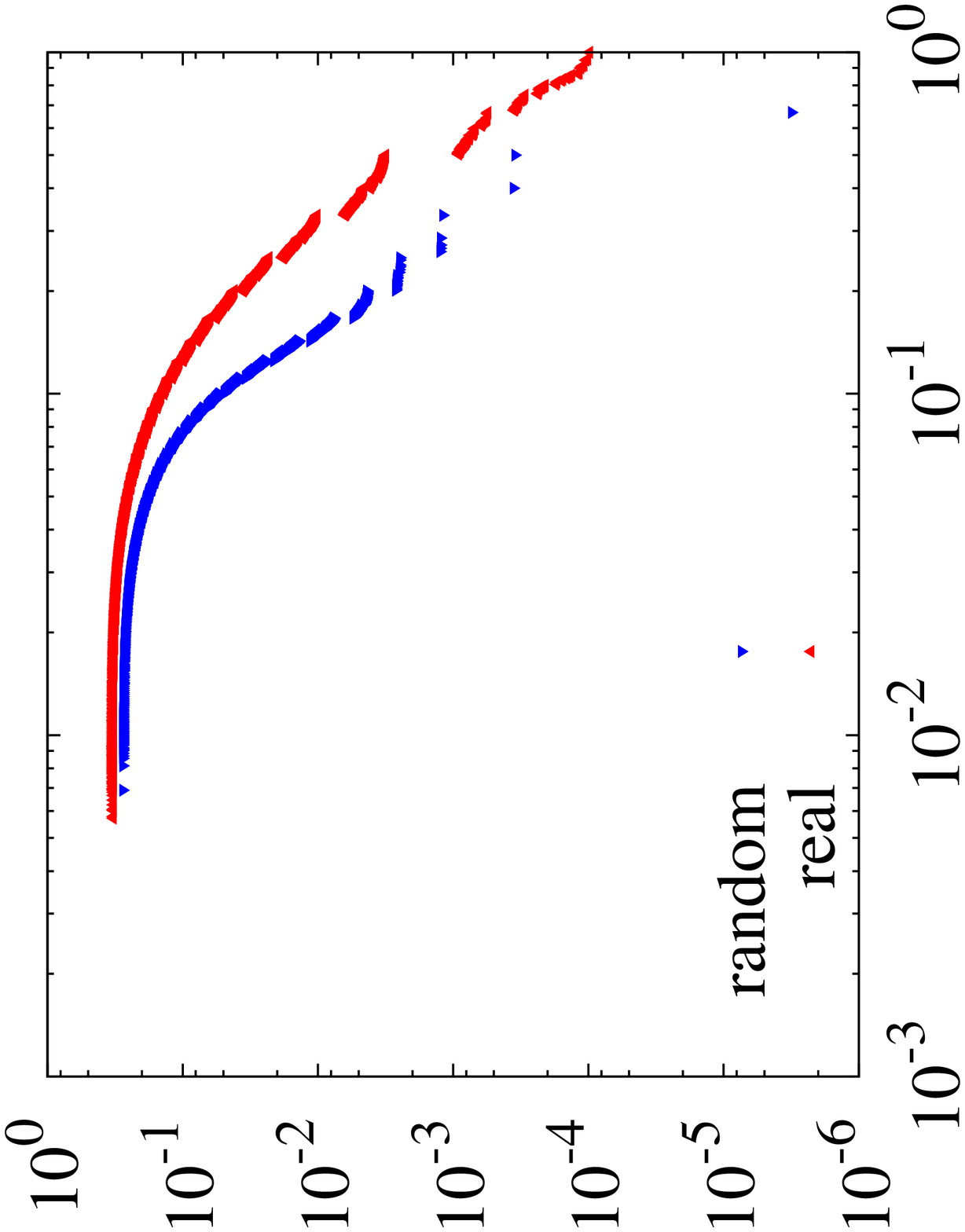} &
\includegraphics[angle=-90,scale=0.16]{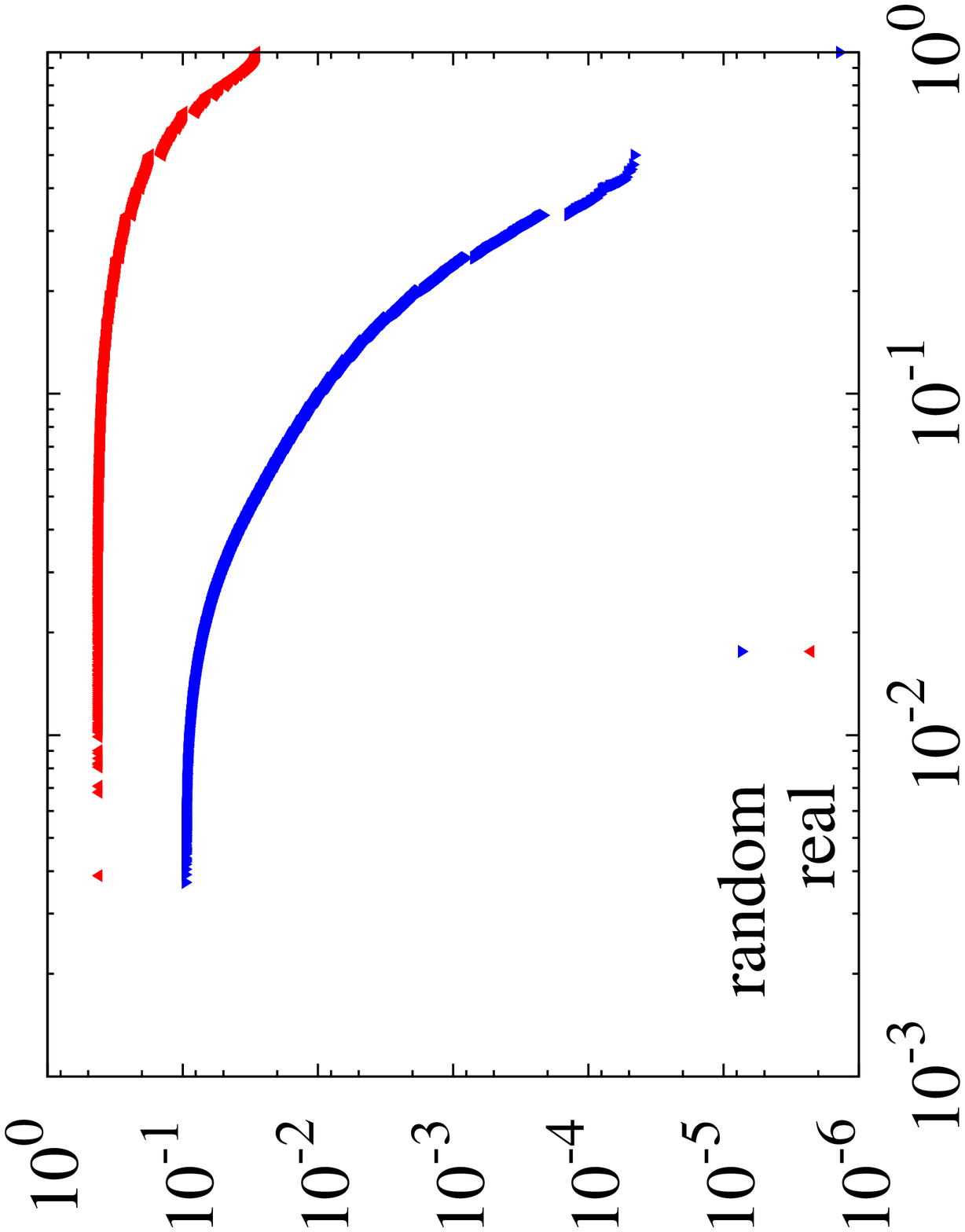}
\tabularnewline
\tabularnewline
{\em Flickr-groups:} & {\em Flickr-favorites:} & {\em P2P-files:} & {\em PRL-papers:}
\tabularnewline
\includegraphics[angle=-90,scale=0.16]{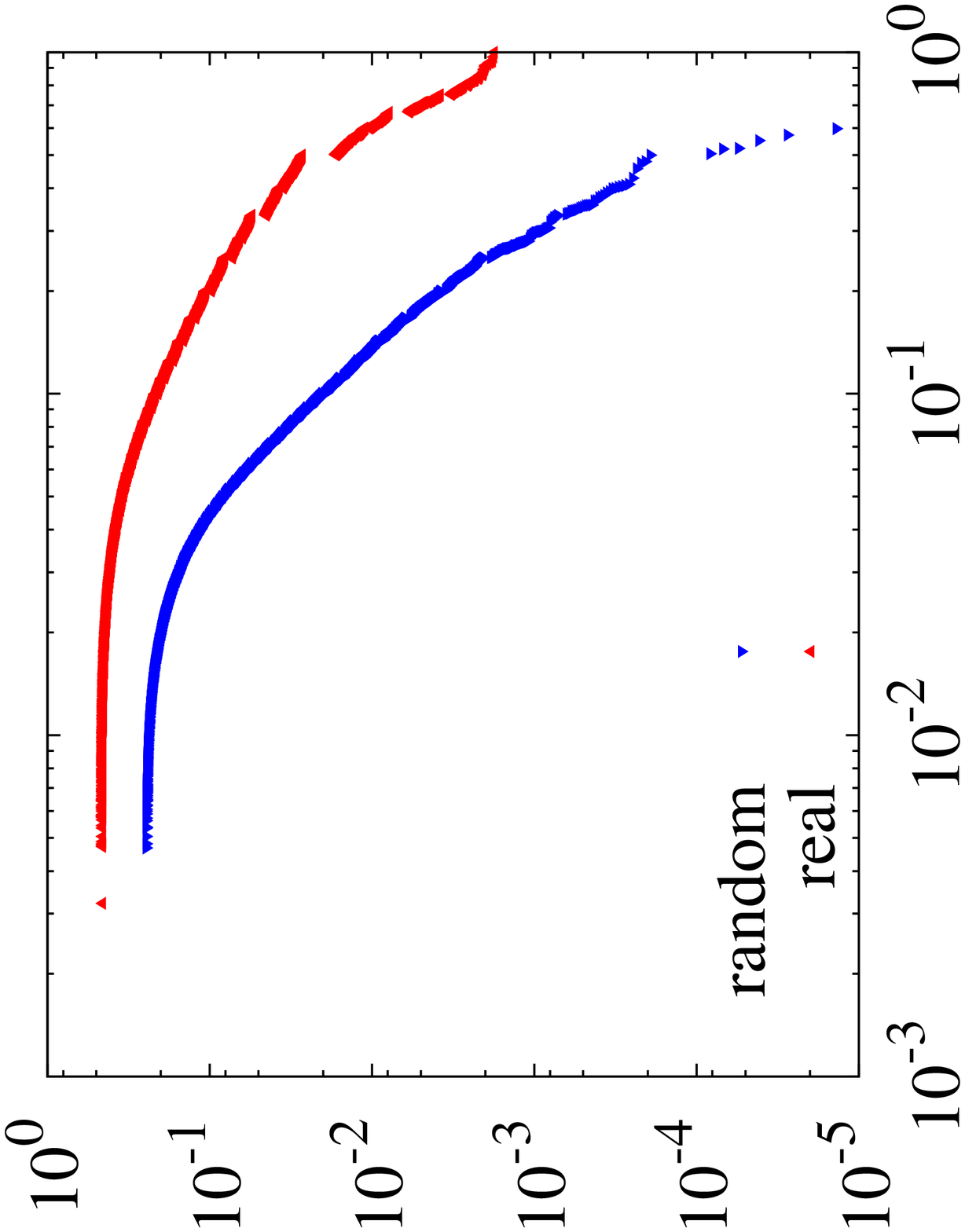} &
\includegraphics[angle=-90,scale=0.16]{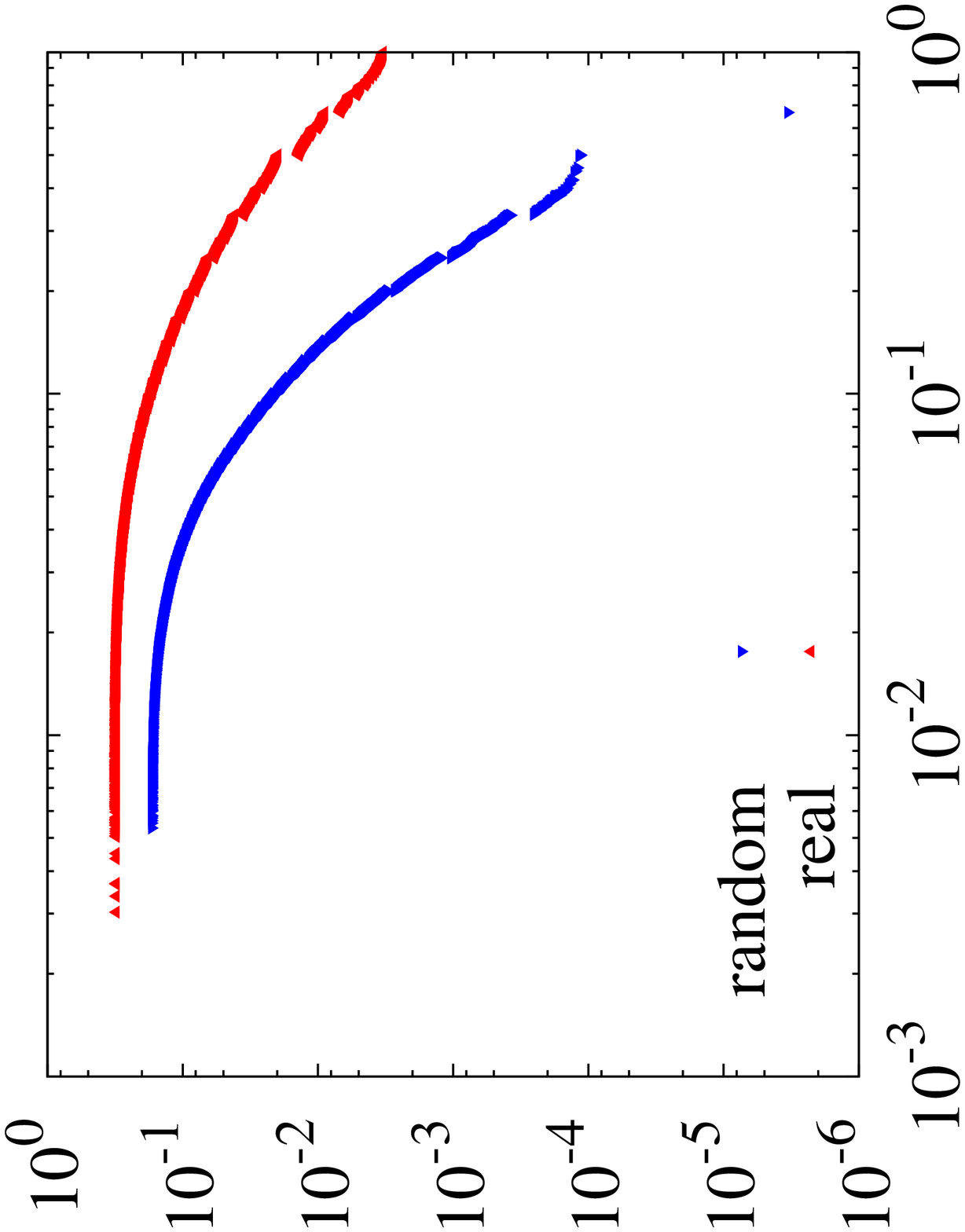} &
\includegraphics[angle=-90,scale=0.16]{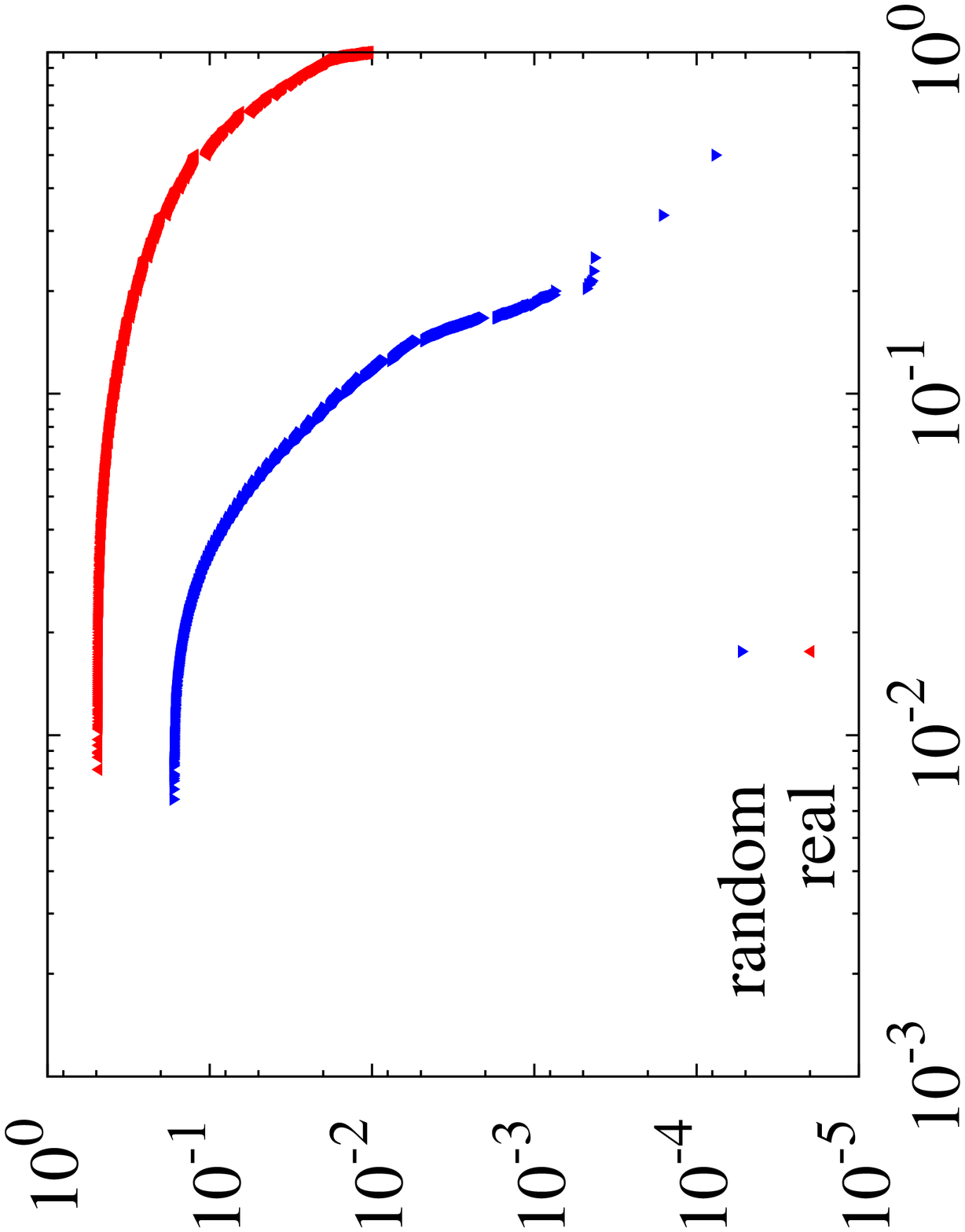} &
\includegraphics[angle=-90,scale=0.16]{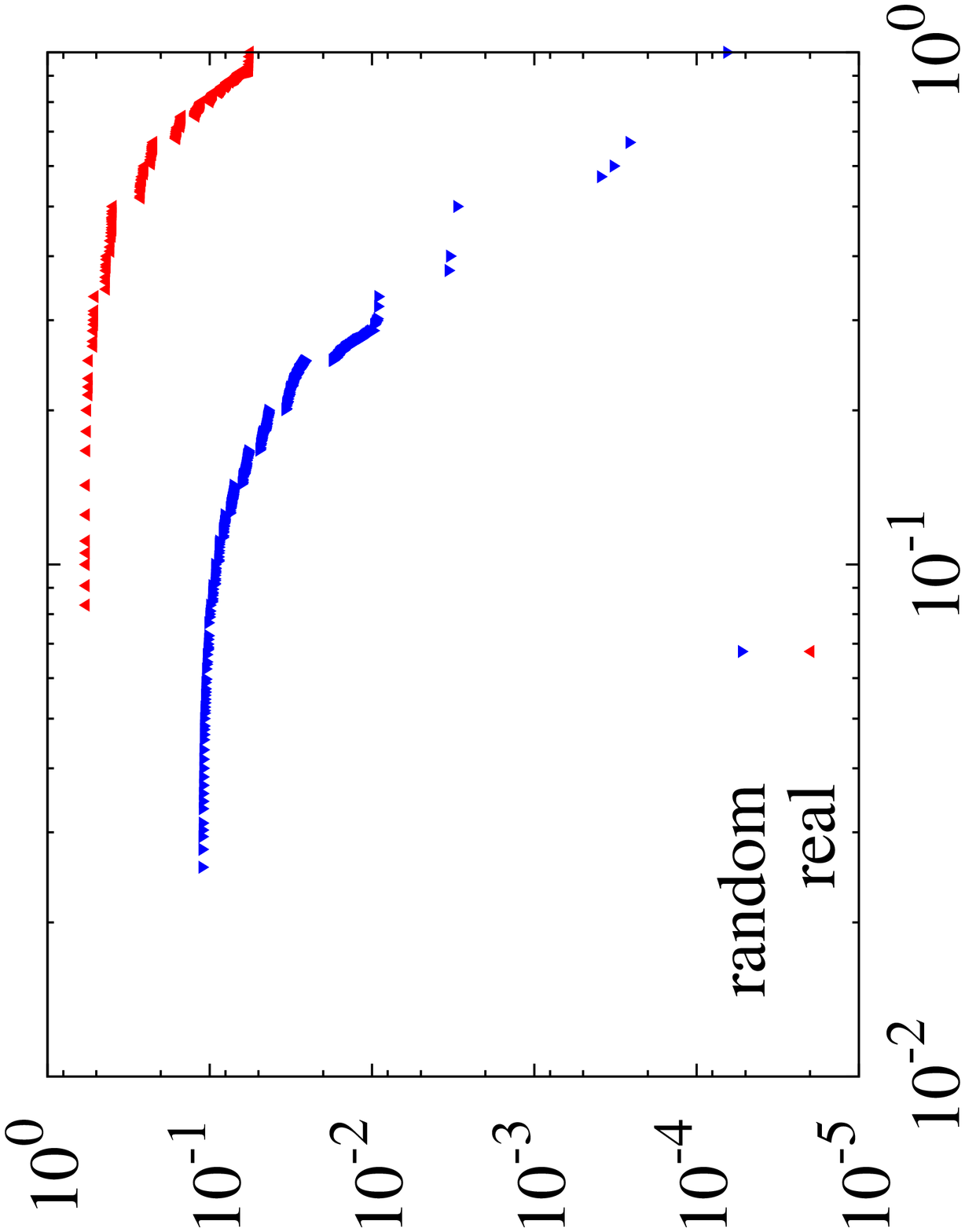}
\tabularnewline
\end{tabular}
\caption{Complementary cumulative distribution of the fraction of internal links per node.\label{fig:cum_dist}} 
\end{figure}

One of the most noticeable differences between both curves lies in the
probability of having a node whose links are all internal ($x=1$): this
fraction is indeed much higher in real than in random graphs.
  We also
observe that real graphs exhibit fewer nodes with very low (or null)
fractions of internal links
(though the fraction of nodes with {\em no} internal link is high in both cases).  In this respect too, the datasets behave
differently: for {\em Imdb-movies} the probability of having a $10^{-2}$ fraction of internal links is more than one order of
magnitude larger in the random than in the real graph, while
{\em  Flickr-tags} curves are close to be superimposed at low fractions.
Notice that this is not directly related to the fact that the number of internal links is
underestimated or not in random graphs:
for {\em Delicious-tags} the ratio between the number of $\bot$-internal links in the real and in the random case
is smaller than for {\em Flickr-tags},
but the difference between the distributions of the fraction of internal links per node
are larger for {\em Delicious-tags} than for {\em Flickr-tags}.

Finally, the very low fractions that we observe are associated to nodes with high degree:
to have a $ 10^{-4}$ fraction of internal links, a node has to have a degree of at least $ 10^{4}$.
Therefore, we study in the following the correlation between the degree of a node and its number of internal links.

\subsection{Correlation of internal links with node degrees} 


As stated before, the number of ($\bot$-)internal links of a node is called its ($\bot$-)internal degree, its total number of links being its degree. 
We investigate in this section the relationship between both quantities, plotting on
Figure~\ref{correlation} the average degree of a node in regards its
($\bot$-)internal degree for the real datasets and the randomized
ones.

\begin{figure}[h!]
\begin{tabular}{M{3.5cm}M{3.5cm}M{3.5cm}M{3.5cm}}
{\em Imdb-movies:} & {\em Delicious-tags:} & {\em Flickr-tags:} & {\em Flickr-comments:}
\tabularnewline
\includegraphics[angle=-90,scale=0.15]{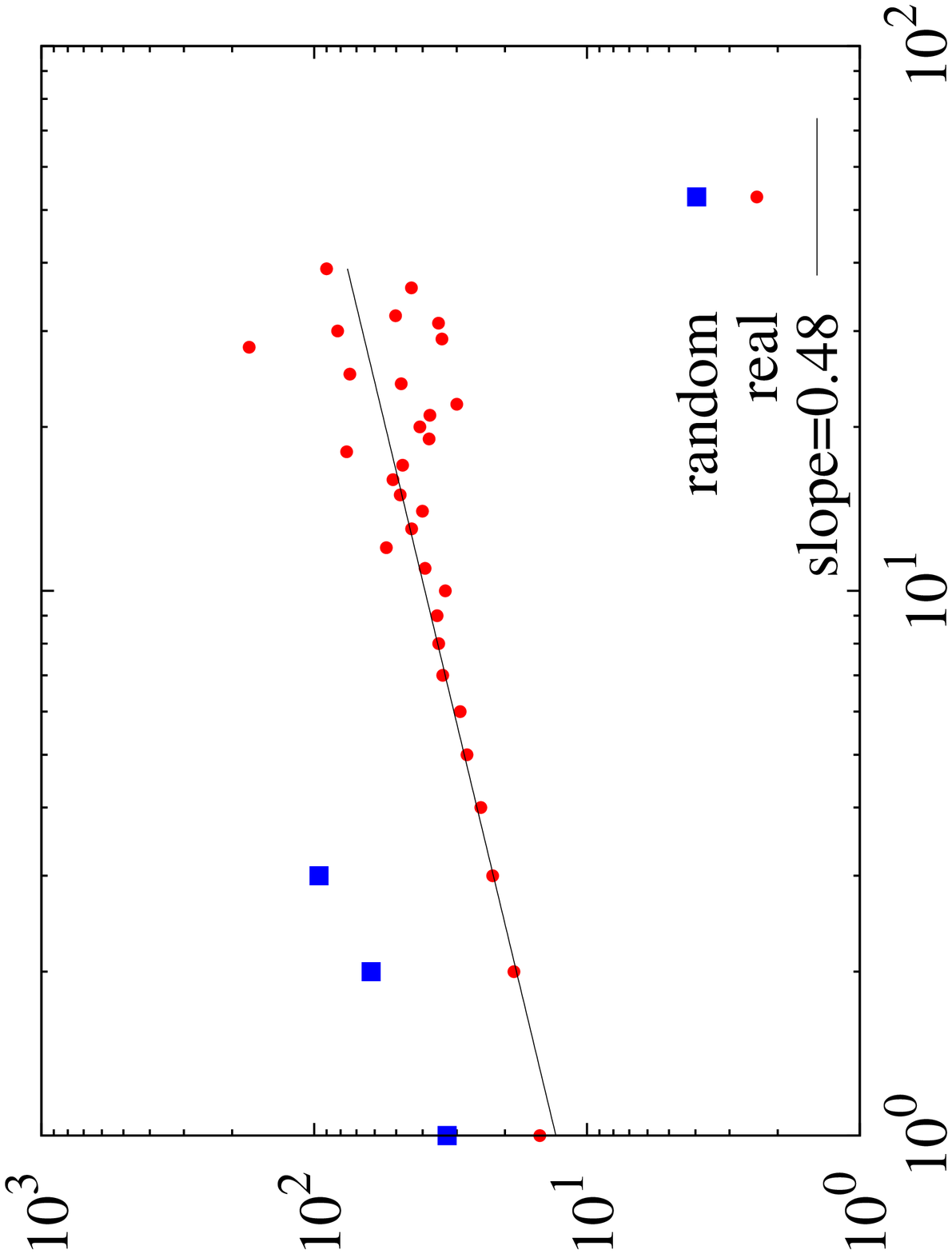} &
\includegraphics[angle=-90,scale=0.15]{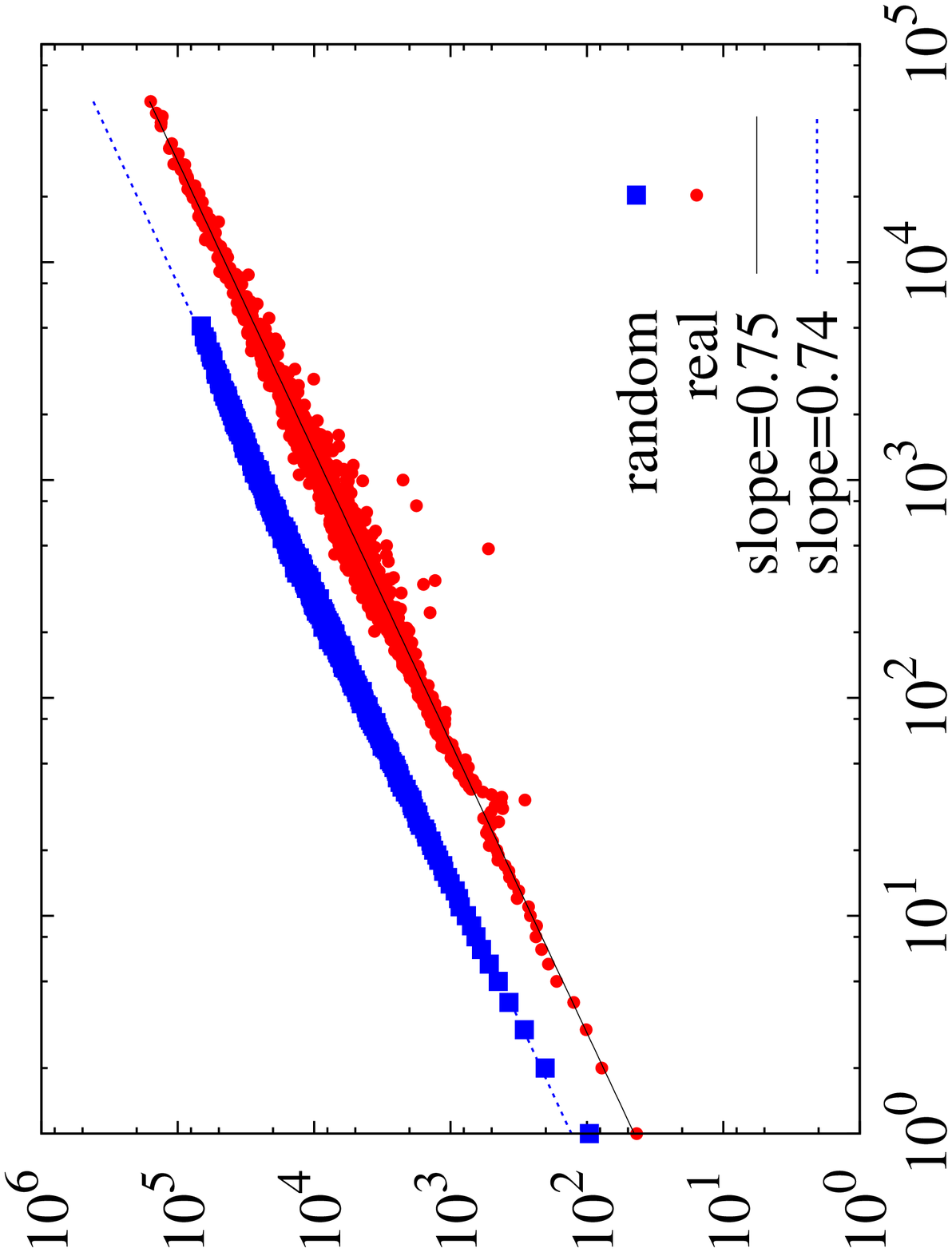} &
\includegraphics[angle=-90,scale=0.15]{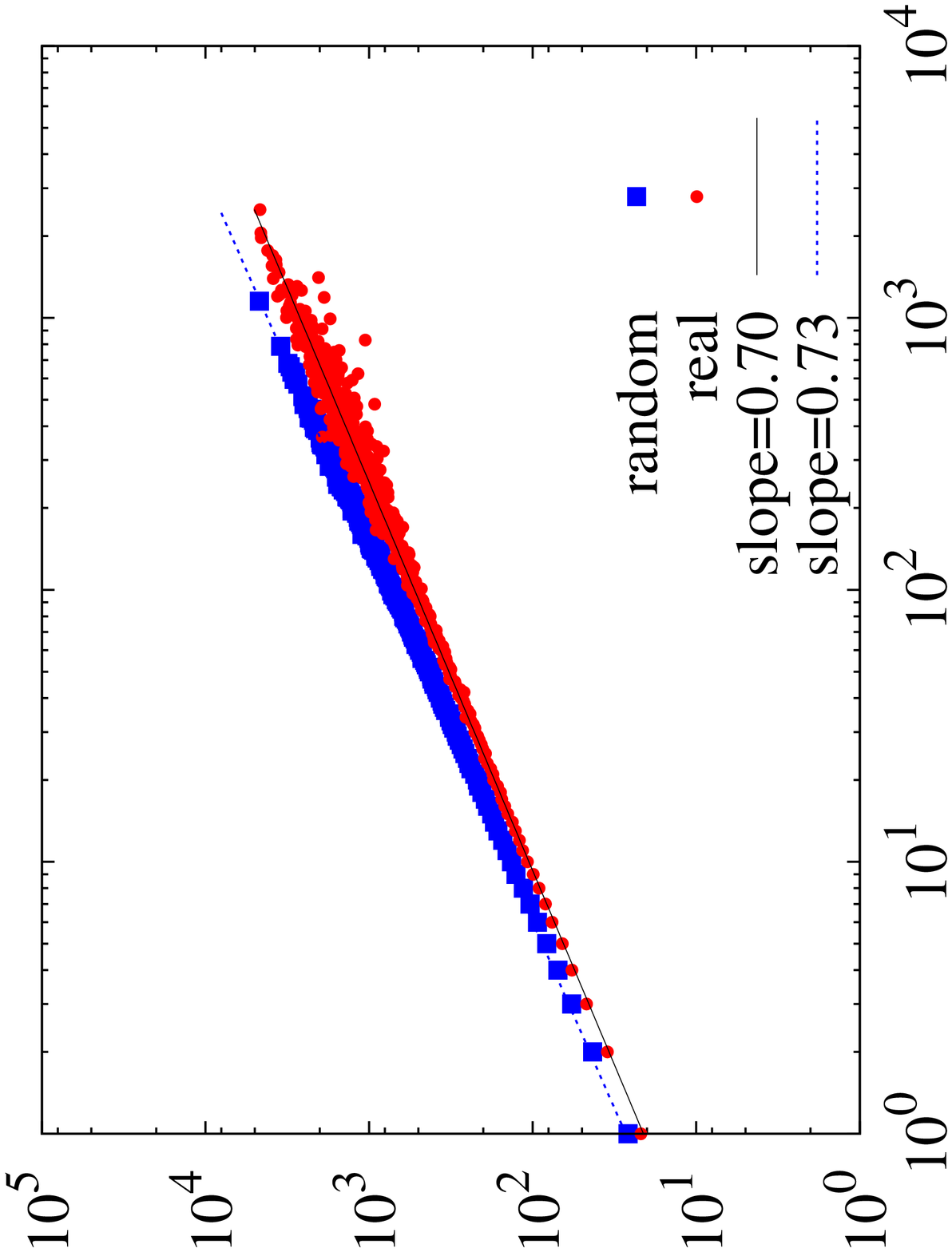} &
\includegraphics[angle=-90,scale=0.15]{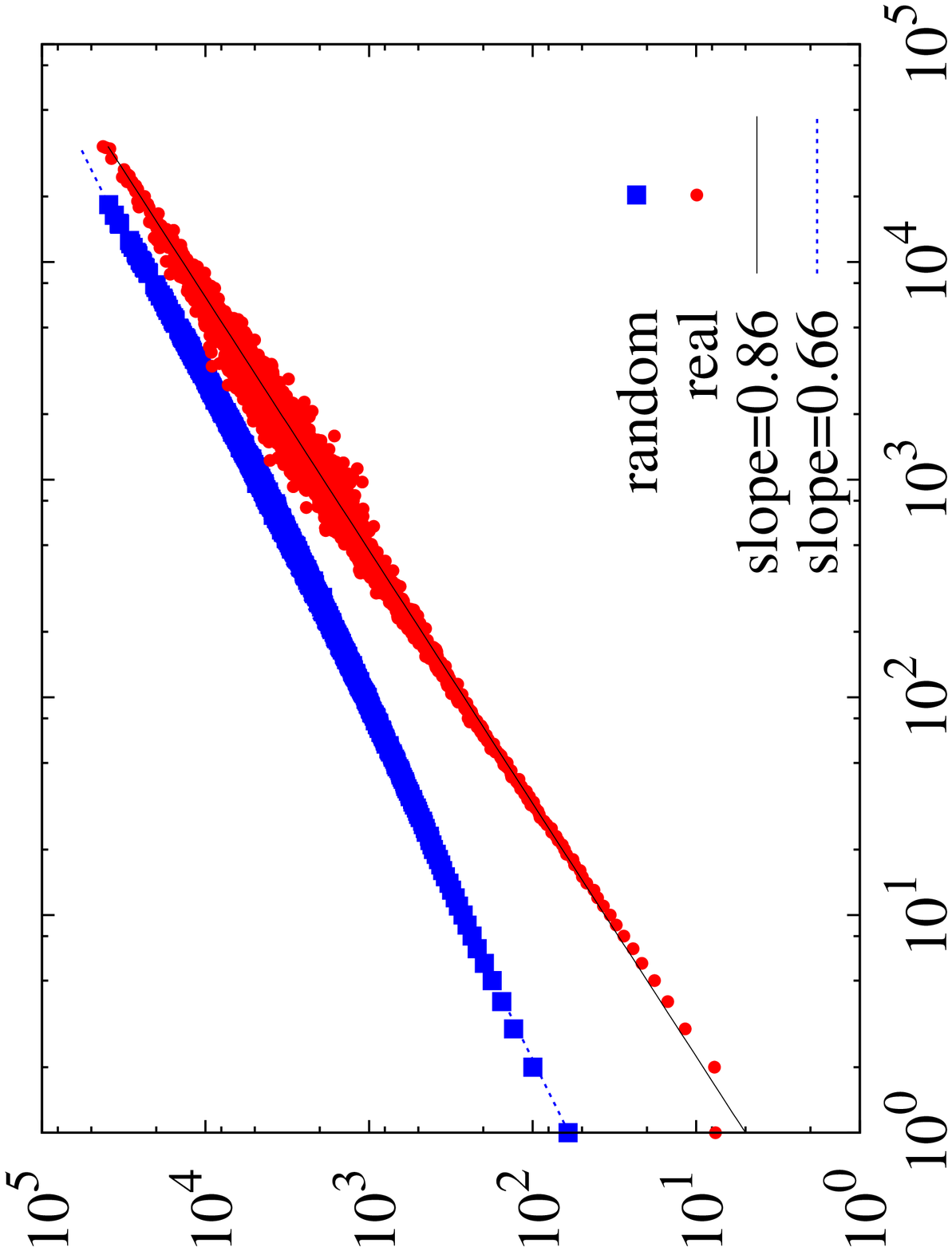}
\tabularnewline
\tabularnewline
{\em Flickr-groups:} & {\em Flickr-favorites:} & {\em P2P-files:} & {\em PRL-papers:}
\tabularnewline
\includegraphics[angle=-90,scale=0.15]{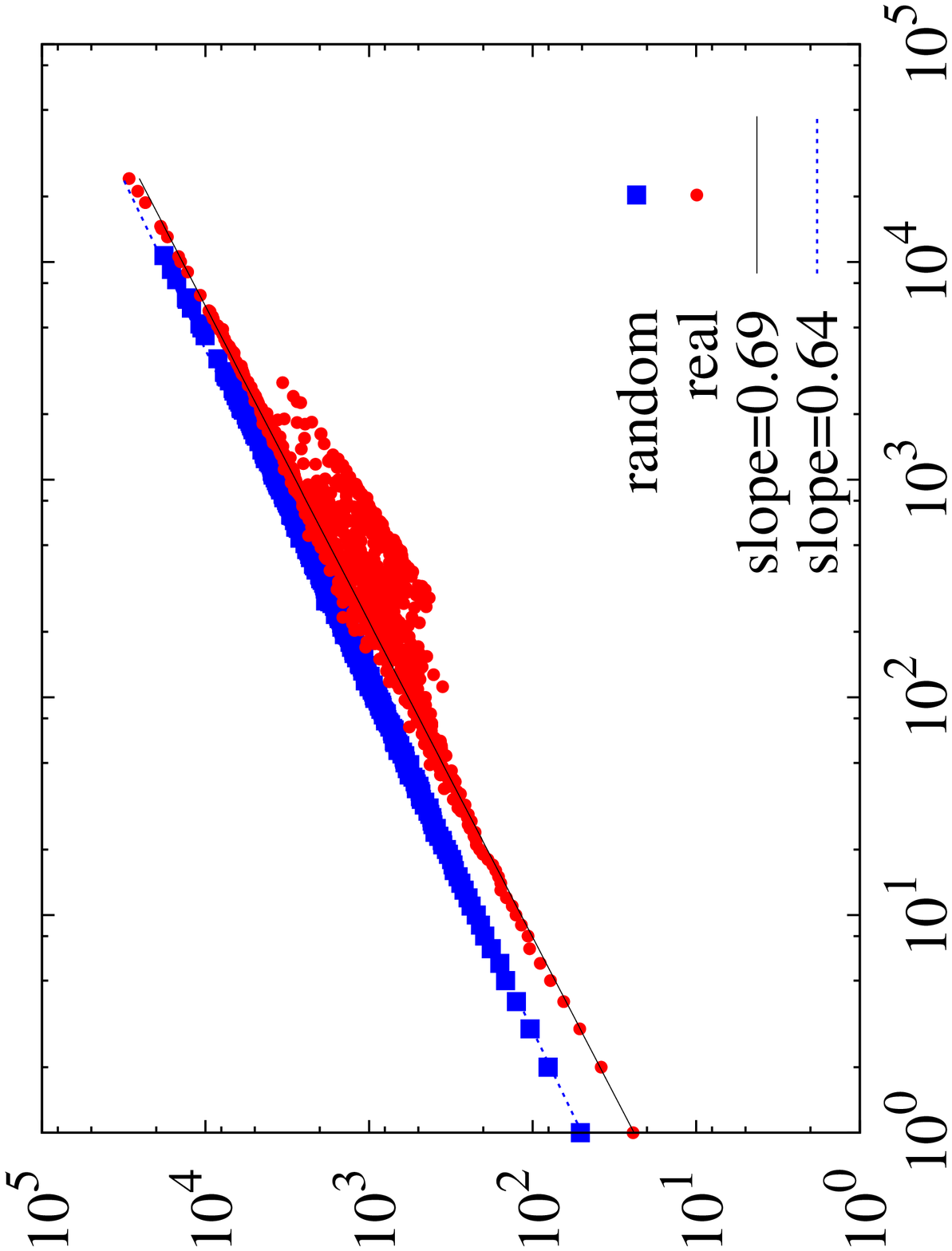} &
\includegraphics[angle=-90,scale=0.15]{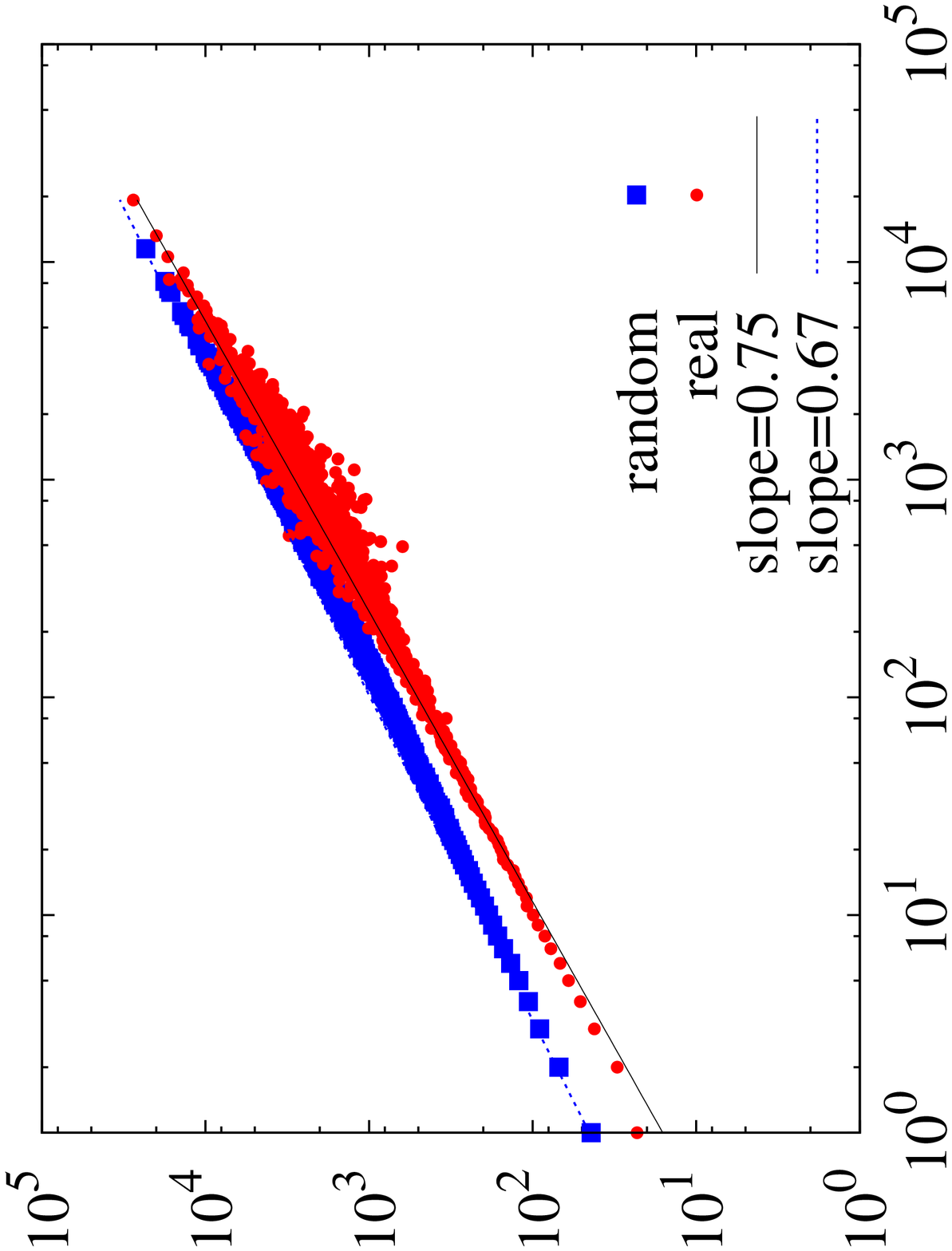} &
\includegraphics[angle=-90,scale=0.15]{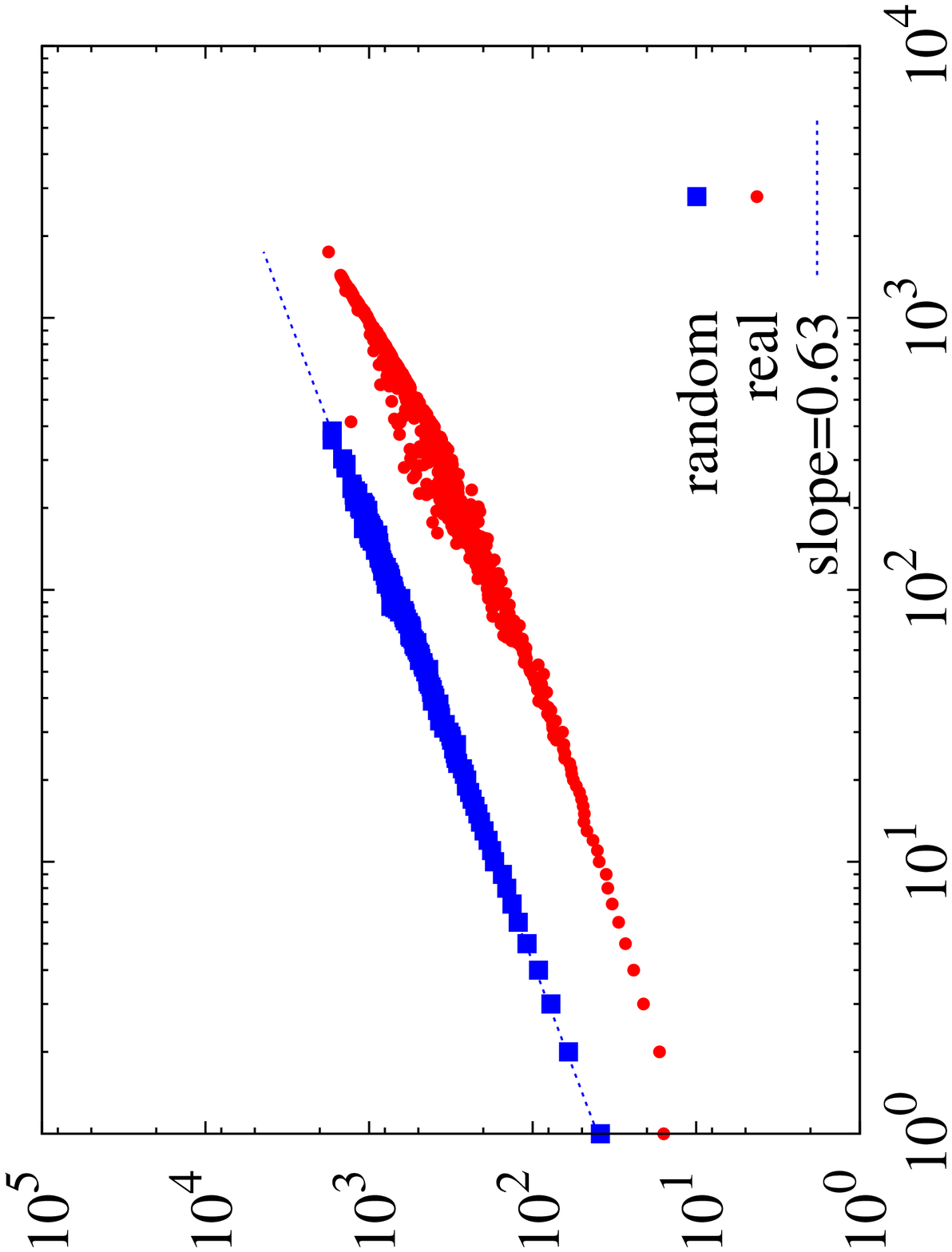} &
\includegraphics[angle=-90,scale=0.15]{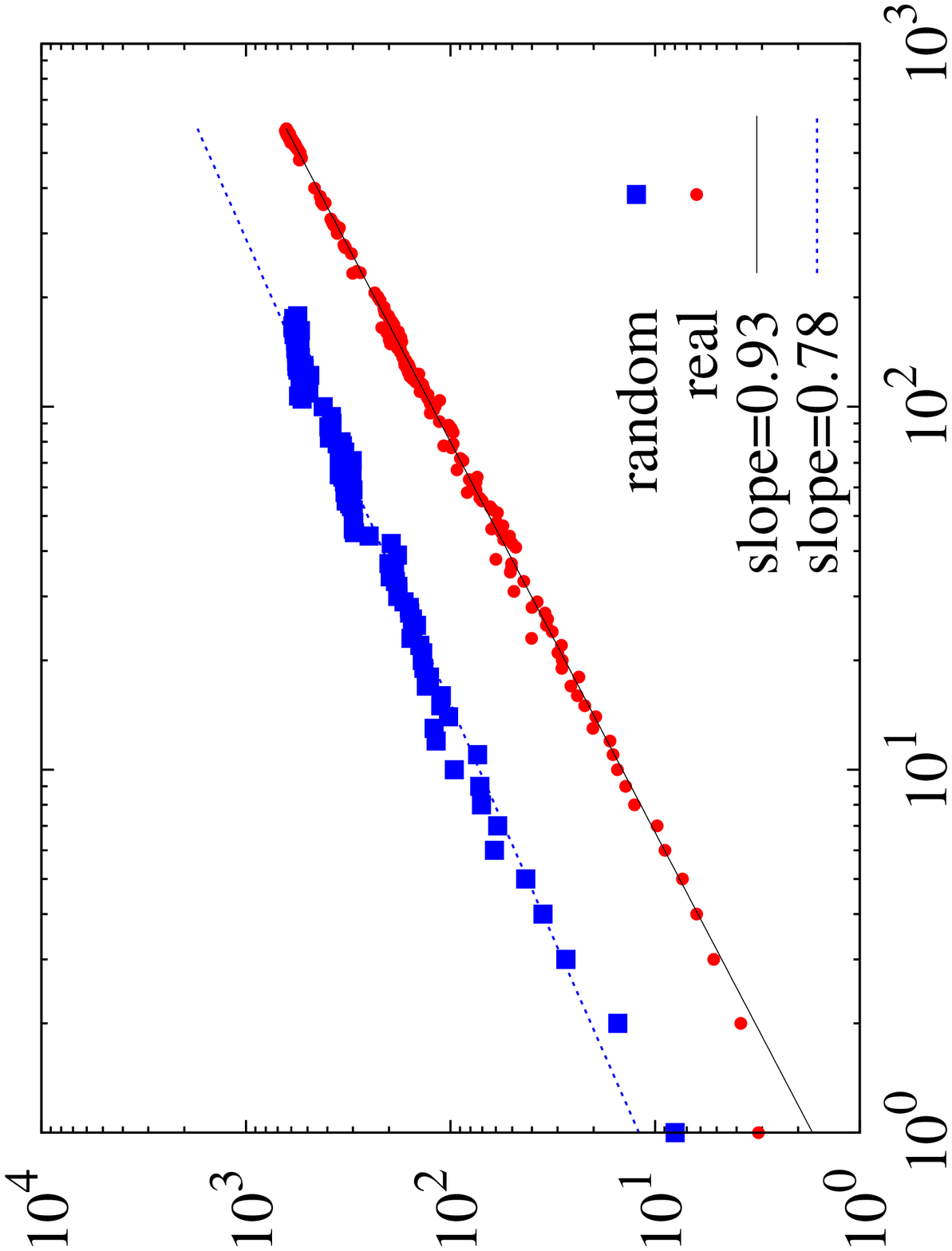}
\tabularnewline
\end{tabular}
\caption{Average degree as a function of the internal degree (for users projection).} 
\label{correlation}
\end{figure}

We observe that both real and random curves in several cases can be approximated by a sub-linear law on several decades.
However, this model is unsatisfactory on {\em P2P-files} database, and questionable on cases where the values are too rare or too scattered: most noticeably {\em Imdb-movies} and {\em Flickr-groups}.
The dispersion observed at large degrees is a consequence of the heterogeneous degree distribution, the number of nodes with high degree being low.
 
If the fact that a given link is internal or not was independent from the node's degree,
 these curves would be linear.
As random graphs have a sublinear behavior, that means that nodes with large degrees have on average a higher fraction of internal links.
This effect can be explained qualitatively: increasing the degree of a node $u$ - everything being otherwise unchanged - implies increasing the probability that one of his neighbors $v$ is such that
$N(v)\setminus \ens{u} \subseteq N(N(u) \setminus \ens{v})$.

On the other hand, the slope for real graphs is in most cases larger than for the random ones - again tagging datasets exhibit a different behavior. 
So there is an additional effect leading high degree nodes to have not as high an internal degree as expected by considering only the degree distributions.
This is consistent with previous observations: the real case provides more internal links and fewer nodes with a low (but not null) fraction of such links, which must be high degree nodes.
This stems from the fact that if nodes $u$ and $v$ are neighbors, the probability that $N(v) \setminus \ens{u} \subseteq N(N(u) \setminus{} \ens{v})$ is all the more important if $v$ has a small degree and $u$ a large one.
Therefore we expect that degree-correlated graphs yield larger slopes than degree-anticorrelated ones.
Yet, a more quantitative understanding of these phenomena calls for a study of the degree correlations in real-world graphs.




\section{Removing internal links\label{sec:delet}}

When modeling complex networks using bipartite graphs \cite{newman2001random,guillaume2004bipartite}, the presence of internal links may be a problem as they are poorly captured by models. To this regard, removing internal links before generating a random bipartite graph may lead to better models.
Moreover,  internal links are precisely these links in a bipartite graph which may be removed without changing the projection. As the bipartite graph may be seen as a compact encoding of its projection \cite{latapy2008basic}, one then obtains an even more compressed encoding. 
Considering the  example of the {\em P2P-files} dataset, it demands 30 MB if stored as a usual 2-mode table of lists, while the corresponding  $\bot$-projection (i.e. users) demands 213 MB and the $\top$-projection: 4.6 GB if stored as table of edges.

However, removing internal links is not trivial, as removing one specific link $(u,v$) may change the nature of other links: while they were internal in the initial graph, they may not be internal anymore after the removal of $(u,v)$. See Figure~\ref{fig:deletion} for an example. 
Therefore, in order to obtain a bipartite graph with no internal link but still the same projection (and so a {\em minimal} graph to this regard),
it is not possible in general to delete all initial internal links 
since this would alter dramatically the structure of the projection.
The set of internal links must therefore be updated after each removal. 
Going further, there may exist removal strategies which maximize the number of removals, whereas other may minimize it.

\begin{figure}[h!]
\centering
\begin{tabular}{M{4.5cm}M{4.5cm}M{4.5cm}M{4.5cm}}
\includegraphics[width=0.2\textwidth]{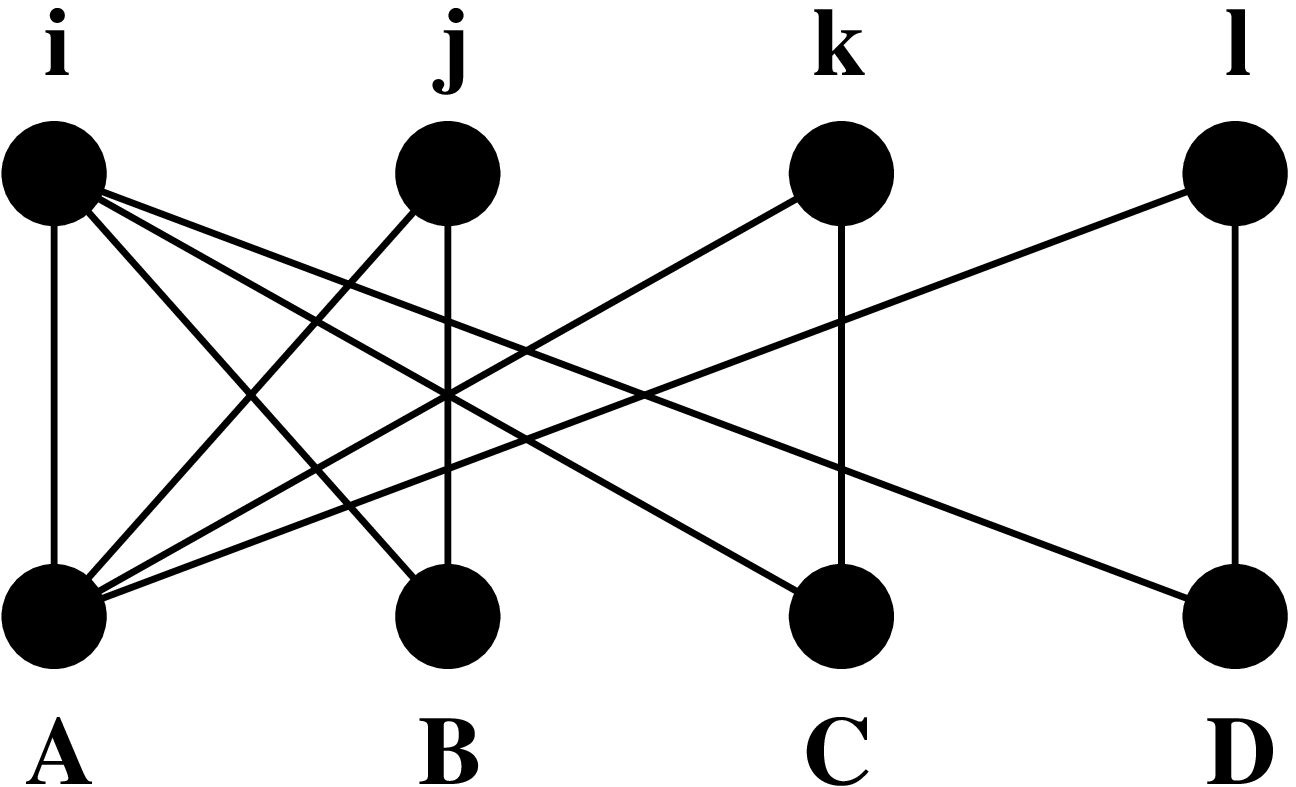} & 
\includegraphics[width=0.2\textwidth]{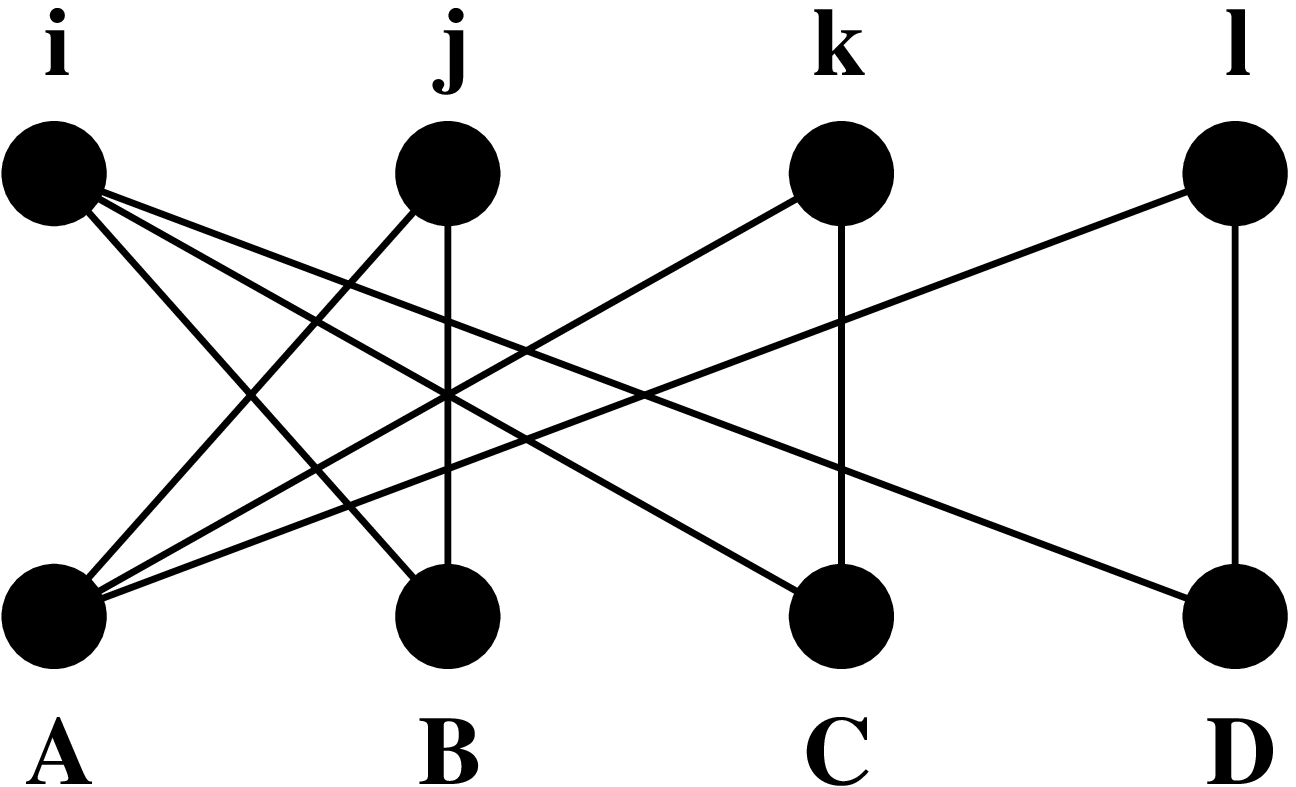} & 
\includegraphics[width=0.1\textwidth]{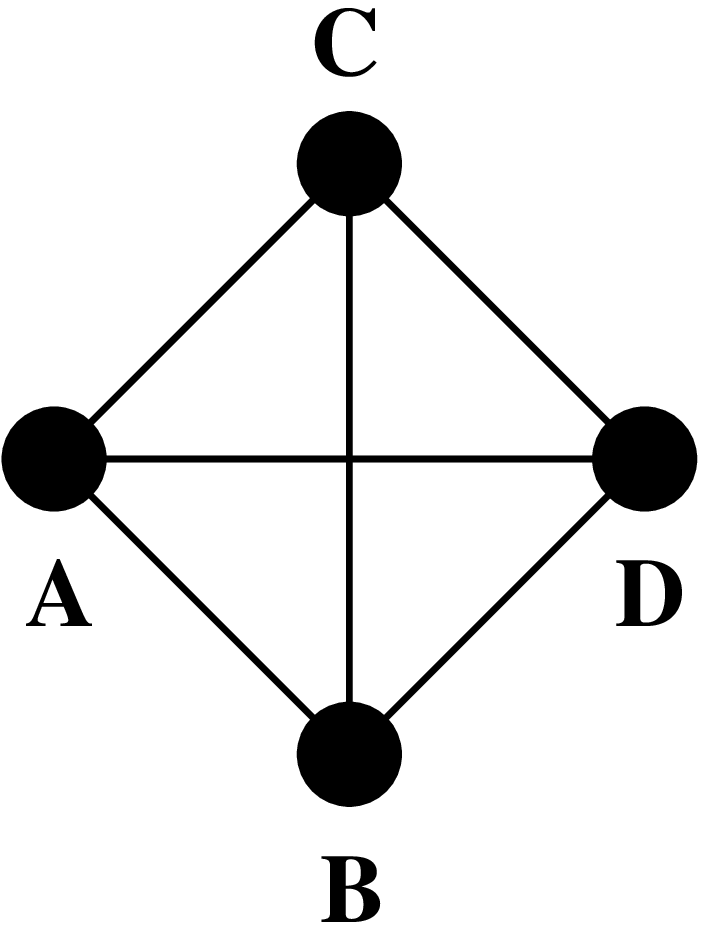}
\tabularnewline
$G$ & $G'=G-(A,i)$ & $G'_\bot = G_\bot$ 
\tabularnewline
\end{tabular}
\caption{\label{fig:deletion}\textbf{Influence of the deletion process on internal links.}
$\{ (A,i),(B,j),(C,k),(D,l) \} $ are $\bot$-internal links of $G$, 
yet deleting $(A,i)$ leads to $ G'$ where $\{ (B,j),(C,k),(D,l) \} $ are no longer $\bot$-internal links,
as they are the only links in $G'$ ensuring that $A$ is connected to respectively $B$, $C$ and $D$ in $ G_\bot$.} 
\end{figure}



To explore these questions, let us consider a random removal process,
where each step consists in choosing an  internal link at random and
removing it, and we iterate such steps
until no internal link remains. Figure~\ref{deletion-process} presents
the number of remaining internal links as a function of the number of
internal link removed for typical cases. We also plot the
upper bound $E_I - x$ (where $x$ denotes the number of link removals),
which represents the hypothetical case where all links initially internal remain
internal during the whole process.

\begin{figure}[h!]
\begin{tabular}{M{3.5cm}M{3.5cm}M{3.5cm}M{3.5cm}}
\multicolumn{2}{c}{\em Imdb-movies:} &  \multicolumn{2}{c}{\em P2P-files:} 
\tabularnewline
\tabularnewline
$\bot$-internal links: & $\top$-internal links: & $\bot$-internal links: & $\top$-internal links:
\tabularnewline
\includegraphics[angle=-90,width=0.26\textwidth]{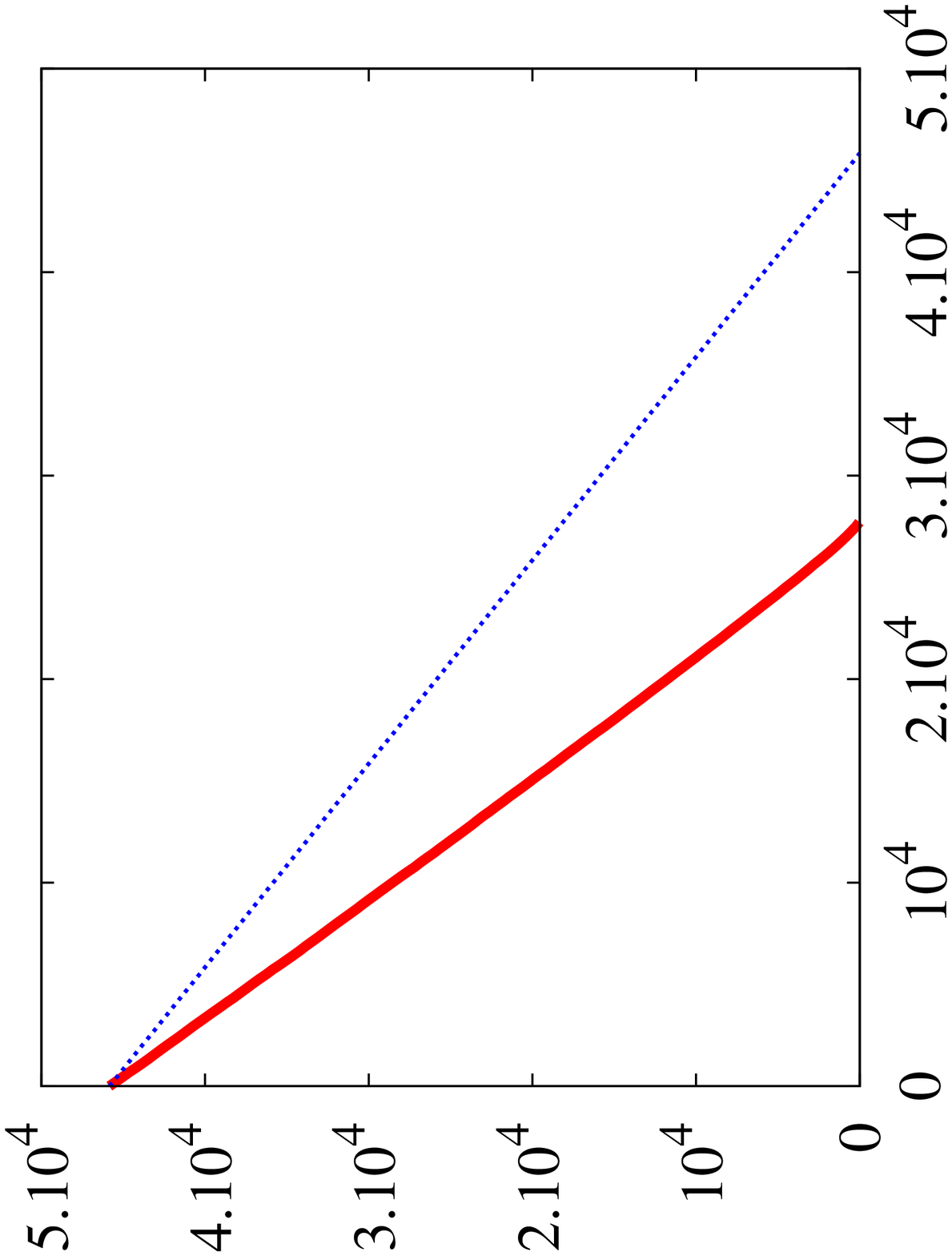}
& \includegraphics[angle=-90,width=0.26\textwidth]{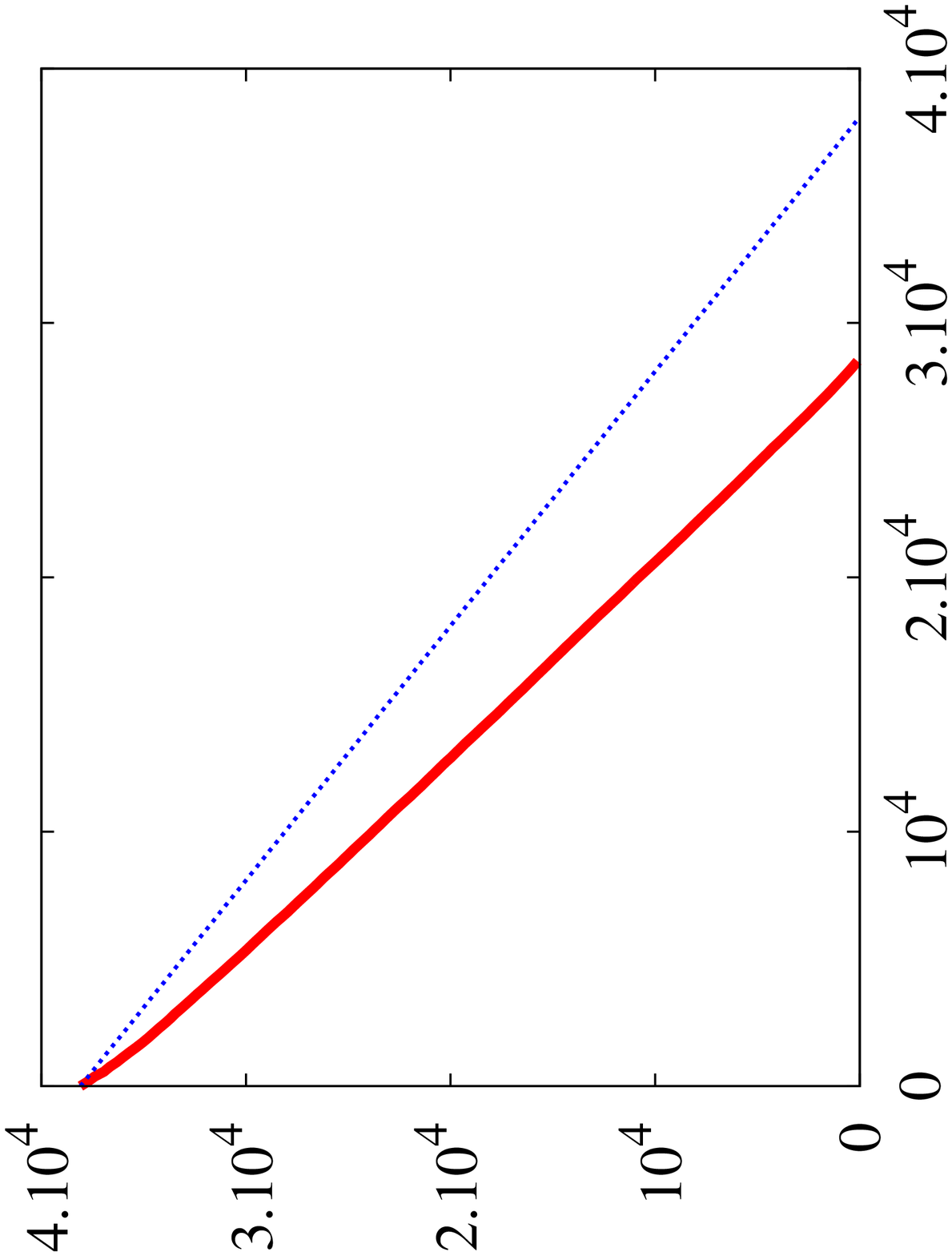}
& \includegraphics[angle=-90,width=0.26\textwidth]{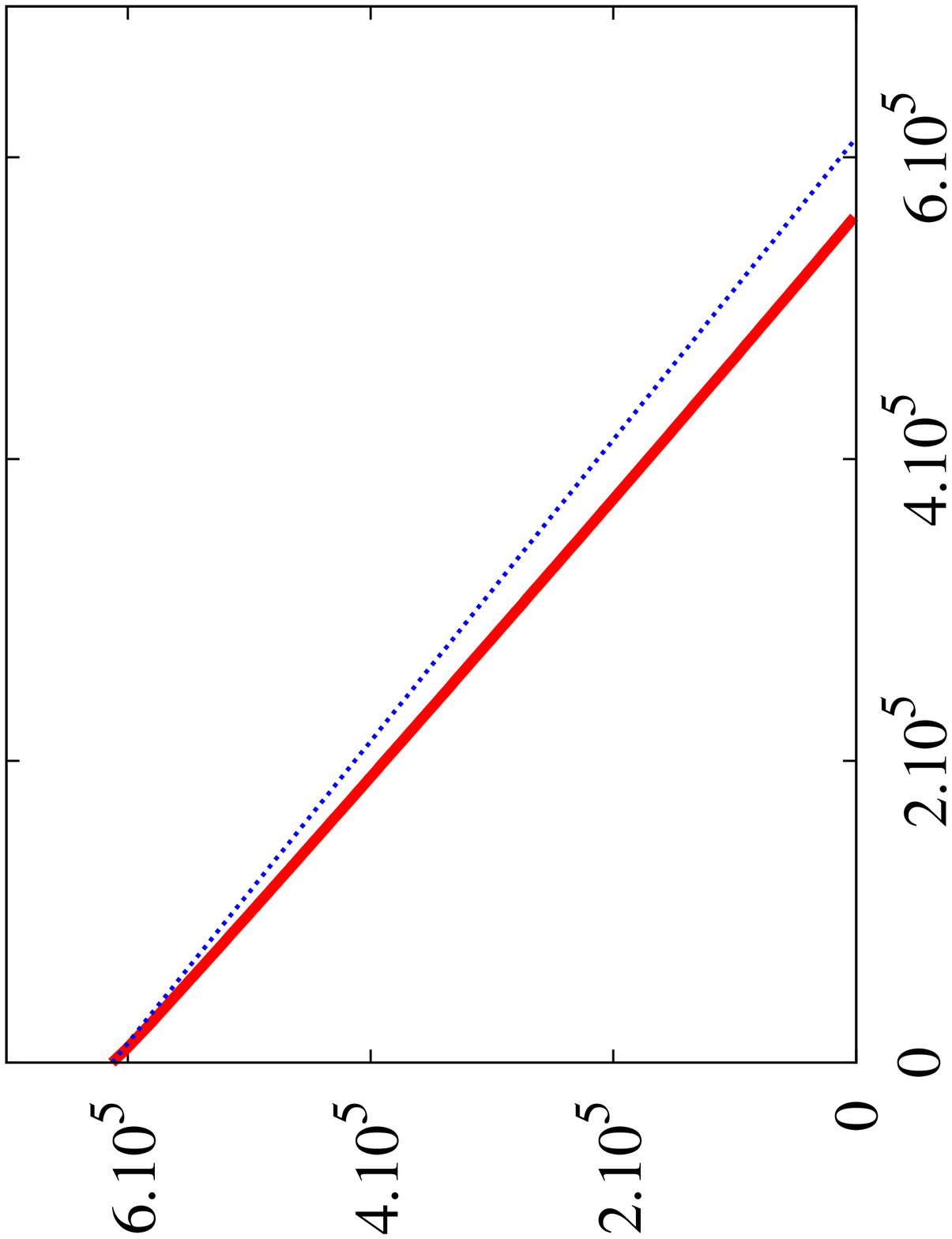} 
& \includegraphics[angle=-90,width=0.26\textwidth]{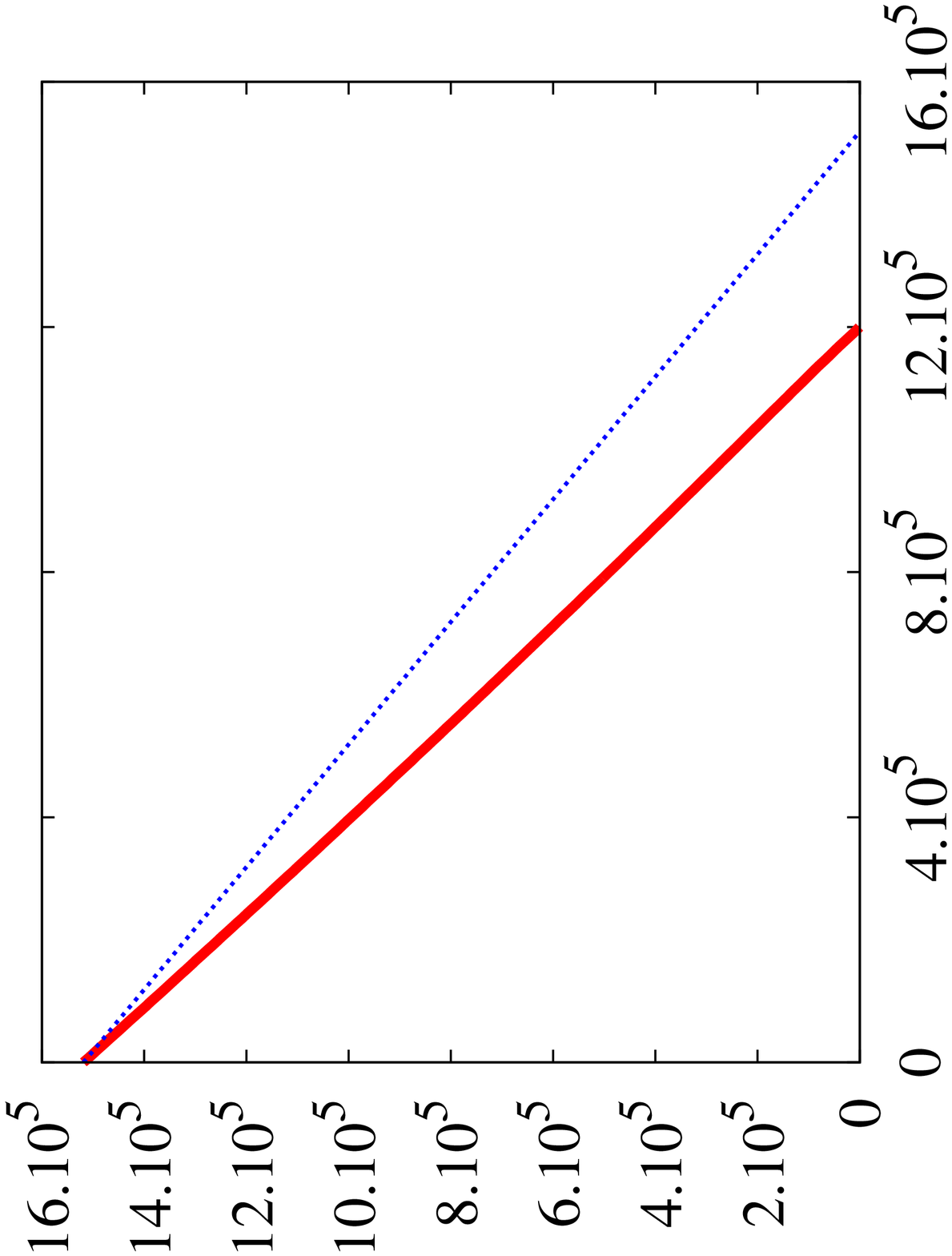}
\end{tabular}
\caption{Number of internal links remaining as a function of the number of deletions. Red thick line: random deletion process, blue thin line: theoretical upper bound.} 
\label{deletion-process}
\end{figure}

This random deletion process leads to a pruned bipartite graph, containing the information of the 1-mode graph.
Going back to the example of the {\em P2P} dataset, the obtained 2-mode storage graph demands 12 MB for the related $\bot$-projection and 22 MB for the $\top$ one,
thus enabling a compression to 0.40 (resp. 0.73) when compared to the standard 30 MB bipartite representation of the graph --- which is itself a compact encoding of the projections.

To go further, one may seek strategies that remove as many internal links as possible, for instance using a greedy algorithm selecting at each step the internal link leading to the lowest decrease of the number of remaining internal links. This is however out of the scope of this paper.

\section{Conclusion}

We introduced the notion of internal links and pairs in bipartite graphs, and proposed it as an important notion for analyzing real-world 2-mode complex networks. Using a wide set of real-world examples, we observed that internal links are very frequent in practice, and that associated statistics are fruitful measures to point out similarities and differences among real-world networks. This makes them a relevant tool for analysis of bipartite graphs, which is an important research topic. Moreover, removing internal links may be used to compact bipartite encodings of graphs and to improve their modeling.

We provided a first step towards the use and understanding of internal
links and pairs. Further investigations could bring us more precise
information about the role of internal links, in particular regarding
the dynamics. We suspect for instance that internal pairs may become
internal links with high probability in future evolution of the
graph. One may also study these links (and pairs) which are both
$\bot$- and $\top$-internal, as they may have  a special importance in a graph.

\subsection*{Acknowledgements}
This work is supported in part by the French ANR {\em MAPE} project.

\end{document}